\definecolor{DarkGray}{rgb}{0.1,0.1,0.5}
\newcommand{\ket}[1]{{|#1\rangle}}
\DeclareMathOperator{\pa}{\operatorname{parent}}
\DeclareMathOperator{\ro}{\operatorname{root}}
\DeclareMathOperator{\spam}{\operatorname{SPAM}}
\newcommand{\identity}{\ensuremath{I}} 
\newcommand{\Id}{\identity} 
\DeclareMathOperator{\CNOT}{\operatorname{CNOT}}
\DeclareMathOperator{\depth}{\operatorname{depth}}
\newcounter{sprows}
\newlength{\spheight}
\newlength{\spraise}
\newlength{\commentslength}
\newcommand{\rem}[1]{}
\newtheorem{theorem}{Theorem}
\newtheorem{lemma}[theorem]{Lemma}
\newfont{\subsubsecfnt}{ptmri8t at 11pt}
\renewcommand{\subparagraph}[1]{\smallskip{\subsubsecfnt #1.}}
\newcommand{\eqnref}[1]{\hyperref[#1]{{(\ref*{#1})}}}
\newcommand{\thmref}[1]{\hyperref[#1]{{Theorem~\ref*{#1}}}}
\newcommand{\lemref}[1]{\hyperref[#1]{{Lemma~\ref*{#1}}}}
\newcommand{\corref}[1]{\hyperref[#1]{{Corollary~\ref*{#1}}}}
\newcommand{\defref}[1]{\hyperref[#1]{{Definition~\ref*{#1}}}}
\newcommand{\secref}[1]{\hyperref[#1]{{Section~\ref*{#1}}}}
\newcommand{\fullfigref}[1]{\hyperref[#1]{{Figure~\ref*{#1}}}}
\newcommand{\figref}[1]{\hyperref[#1]{{Fig.~\ref*{#1}}}}  
\newcommand{\tabref}[1]{\hyperref[#1]{{Table~\ref*{#1}}}}
\newcommand{\remref}[1]{\hyperref[#1]{{Remark~\ref*{#1}}}}
\newcommand{\appref}[1]{\hyperref[#1]{{Appendix~\ref*{#1}}}}
\newcommand{\claimref}[1]{\hyperref[#1]{{Claim~\ref*{#1}}}}
\newcommand{\factref}[1]{\hyperref[#1]{{Fact~\ref*{#1}}}}
\newcommand{\propref}[1]{\hyperref[#1]{{Proposition~\ref*{#1}}}}
\newcommand{\exampleref}[1]{\hyperref[#1]{{Example~\ref*{#1}}}}
\newcommand{\conjref}[1]{\hyperref[#1]{{Conjecture~\ref*{#1}}}}
\def\COLOR{}
\definecolor{Cayenne}{rgb}{0.5,0,0}
\definecolor{Midnight}{rgb}{0,0,0.5}
\definecolor{Plum}{rgb}{0.5,0,0.5}
\definecolor{Teal}{rgb}{0,0.5,0.5}
\definecolor{Clover}{rgb}{0,0.5,0}
\definecolor{Maroon}{rgb}{0.5,0,0.25}
\definecolor{Ocean}{rgb}{0,0.25,0.5}
\definecolor{Tangerine}{rgb}{1,0.5,0}
\definecolor{Strawberry}{rgb}{1,0,0.5}
\definecolor{Fern}{rgb}{0.25,0.5,0}
\definecolor{Aqua}{rgb}{0,0.5,1}
\definecolor{Moss}{rgb}{0,0.5,0.25}
\definecolor{Mocha}{rgb}{0.5,0.25,0}
\definecolor{Lemon}{rgb}{1,1,0}
\definecolor{Asparagus}{rgb}{0.5,0.5,0}
\definecolor{Grape}{rgb}{0.5,0,1}
\definecolor{Iron}{rgb}{.3,.3,.3}
\definecolor{Steel}{rgb}{.4,.4,.4}
\definecolor{Purple}{rgb}{.5,0,.5}
\definecolor{purple(x11)}{rgb}{0.63, 0.36, 0.94}
\let\save@mathaccent\mathaccent
\newcommand*\if@single[3]{%
  \setbox0\hbox{${\mathaccent"0362{#1}}^H$}%
  \setbox2\hbox{${\mathaccent"0362{\kern0pt#1}}^H$}%
  \ifdim\ht0=\ht2 #3\else #2\fi
  }
\newcommand*\rel@kern[1]{\kern#1\dimexpr\macc@kerna}
\newcommand*\widebar[1]{\@ifnextchar^{{\wide@bar{#1}{0}}}{\wide@bar{#1}{1}}}
\newcommand*\wide@bar[2]{\if@single{#1}{\wide@bar@{#1}{#2}{1}}{\wide@bar@{#1}{#2}{2}}}
\newcommand*\wide@bar@[3]{%
  \begingroup
  \def\mathaccent##1##2{%
    \let\mathaccent\save@mathaccent
    \if#32 \let\macc@nucleus\first@char \fi
    \setbox\z@\hbox{$\macc@style{\macc@nucleus}_{}$}%
    \setbox\tw@\hbox{$\macc@style{\macc@nucleus}{}_{}$}%
    \dimen@\wd\tw@
    \advance\dimen@-\wd\z@
    \divide\dimen@ 3
    \@tempdima\wd\tw@
    \advance\@tempdima-\scriptspace
    \divide\@tempdima 10
    \advance\dimen@-\@tempdima
    \ifdim\dimen@>\z@ \dimen@0pt\fi
    \rel@kern{0.6}\kern-\dimen@
    \if#31
      \overline{\rel@kern{-0.6}\kern\dimen@\macc@nucleus\rel@kern{0.4}\kern\dimen@}%
      \advance\dimen@0.4\dimexpr\macc@kerna
      \let\final@kern#2%
      \ifdim\dimen@<\z@ \let\final@kern1\fi
      \if\final@kern1 \kern-\dimen@\fi
    \else
      \overline{\rel@kern{-0.6}\kern\dimen@#1}%
    \fi
  }%
  \macc@depth\@ne
  \let\math@bgroup\@empty \let\math@egroup\macc@set@skewchar
  \mathsurround\z@ \frozen@everymath{\mathgroup\macc@group\relax}%
  \macc@set@skewchar\relax
  \let\mathaccentV\macc@nested@a
  \if#31
    \macc@nested@a\relax111{#1}%
  \else
    \def\gobble@till@marker##1\endmarker{}%
    \futurelet\first@char\gobble@till@marker#1\endmarker
    \ifcat\noexpand\first@char A\else
      \def\first@char{}%
    \fi
    \macc@nested@a\relax111{\first@char}%
  \fi
  \endgroup
}
\newcommand\footnoteref[1]{\protected@xdef\@thefnmark{\ref{#1}}\@footnotemark}
\begin{document}

\title{Fault-tolerant syndrome extraction and cat state preparation with fewer qubits}

\author{Prithviraj Prabhu}
\email{prithvirajprab@gmail.com}
\affiliation{University of Southern California, Los Angeles, CA 90089, USA}
\orcid{0000-0002-2445-2701}
\author{Ben Reichardt}
\orcid{0000-0003-0290-4698}

\maketitle

\begin{abstract}
We reduce the extra qubits needed for two fault-tolerant quantum computing protocols: error correction, specifically syndrome bit measurement, and cat state preparation.  For distance-three fault-tolerant syndrome extraction, we show an exponential reduction in qubit overhead over the previous best protocol.  For a weight-$w$ stabilizer, we demonstrate that stabilizer measurement tolerating one fault needs at most $\lceil \log_2 w \rceil + 1$ ancilla qubits.  If qubits reset quickly, four ancillas suffice.  We also study the preparation of entangled cat states, and prove that the overhead for distance-three fault tolerance is logarithmic in the cat state size.  These results apply both to near-term experiments with a few qubits, and to the general study of the asymptotic resource requirements of syndrome measurement and state preparation.

With $a$ flag qubits, previous methods use $O(a)$ flag patterns to identify faults. In order to use the same flag qubits more efficiently, we show how to use nearly all $2^a$ possible flag patterns, by constructing maximal-length paths through the $a$-dimensional hypercube.
\end{abstract}

\section{Introduction}

A critical component of quantum error correction is syndrome measurement: a set of circuits that are used to pinpoint which qubits have errors.  This process of error identification is itself susceptible to noise and may fail.  To make this robust, extra (ancilla) qubits can be used to identify damaging mid-circuit faults and mitigate the spread of errors.  The objective of this paper is to reduce the overhead of ancilla qubits used in imparting this fault tolerance.  In particular, we focus on optimizing the flag technique for distance-three fault-tolerant stabilizer measurement.  We also reduce qubit overhead in distance-three fault-tolerant cat state preparation.  Cat states~\cite{Greenberger1989} have applications in many areas of quantum computing, including communication~\cite{Gisin07}, information processing~\cite{PanChen2012}, and error correction~\cite{Shor96}.  Besides practical applications, our results on cat state preparation are theoretically interesting since: $i$) we introduce the study of asymptotic estimates of qubit overhead for the fault-tolerant preparation of cat states of \textit{arbitrary} size, and, $ii$) ideas developed for cat state preparation may provide clues for the fault-tolerant preparation of logical states of more complex codes.

 \begin{figure}
 \centering
\subfigure[\label{f:genove}]{\includegraphics[width = 0.42\textwidth]{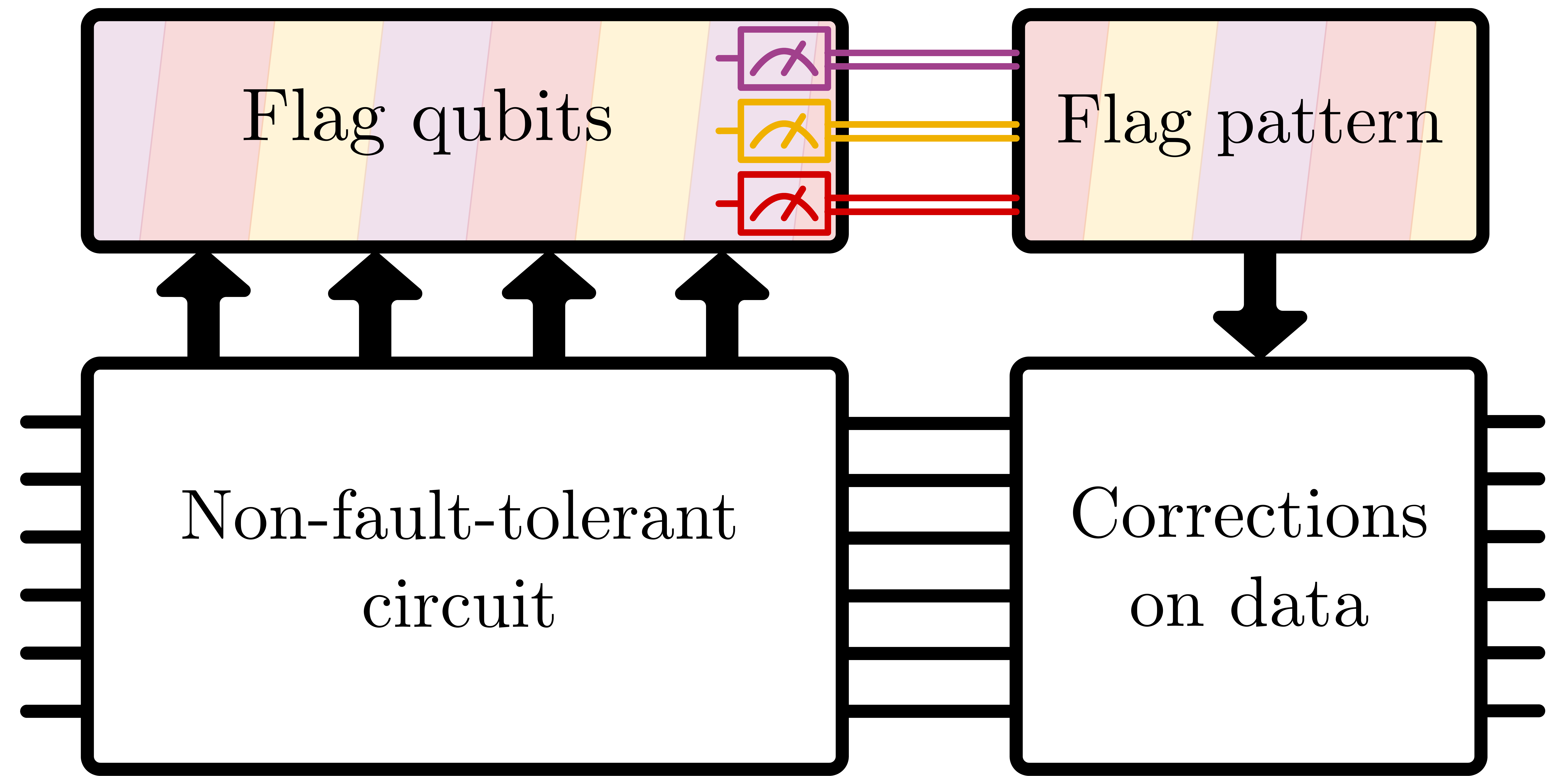}}
\subfigure[\label{f:syndromemeasurementd3slowresetw10}]{\includegraphics[width = 0.46\textwidth]{syndromemeasurementd3w10slowresetalt2}}
\caption{
(a)~Function of a flag scheme. Errors in a non-fault-tolerant circuit can be made to spread into flag qubits.  On measurement, the flag qubits yield a pattern of $1$s and~$0$s, based on which the data is corrected.  
(b)~Circuit to measure the stabilizer $X^{\otimes 10}$, using three flag qubits, in color, to protect against one $X$ fault (distance three).
}
\end{figure}

\begin{figure*}
\centering
\subfigure[\label{f:Shorcard} \cite{Shor96}]{\includegraphics[width = 0.24\textwidth]{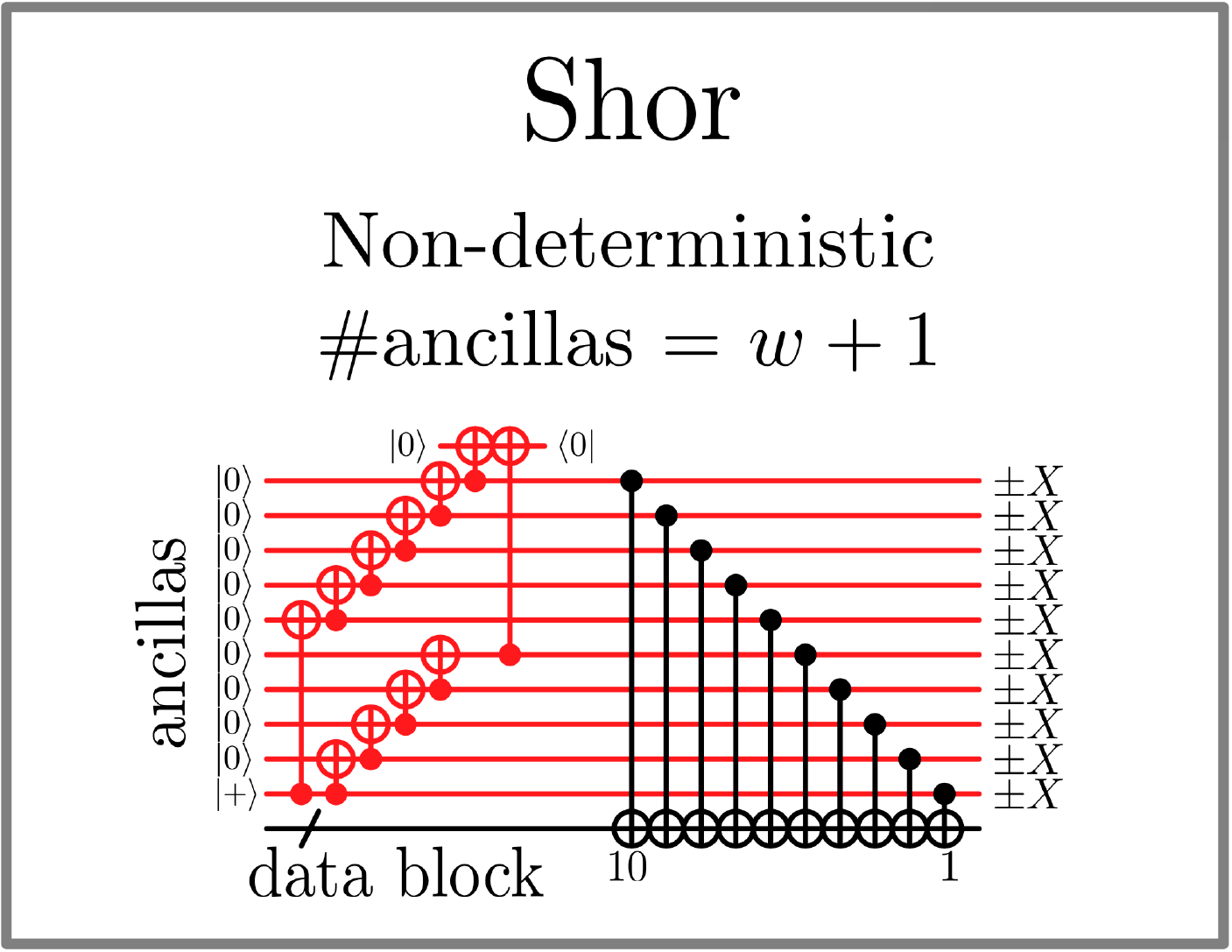}}
\subfigure[\label{f:DAcard} \cite{DiVincenzoAliferis06slow}]{\includegraphics[width = 0.24\textwidth]{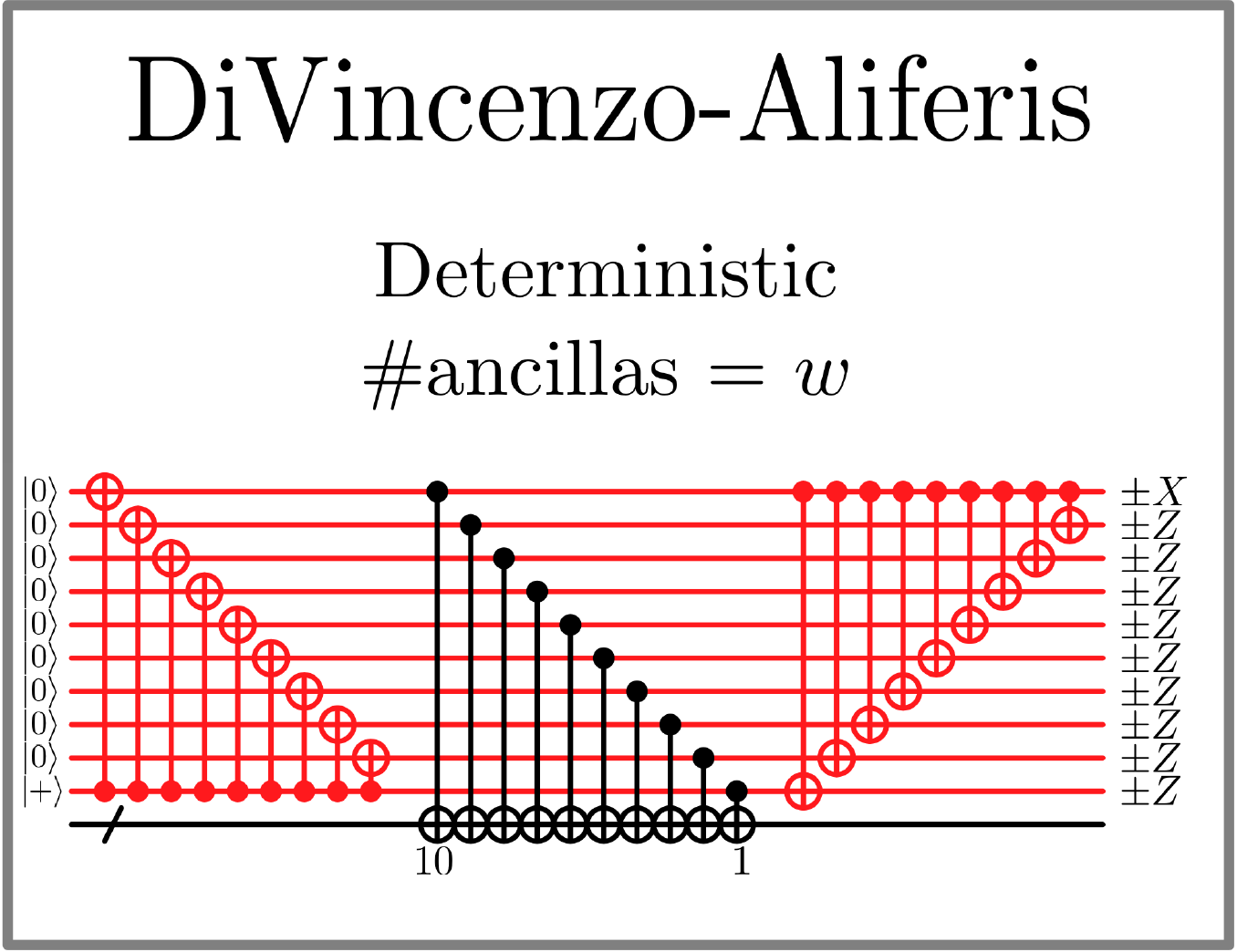}}
\subfigure[\label{f:CompDAcard} \cite{Stephens14colorcodeft, YoderKim16trianglecodes, ChaoReichardt18fewqubitcomputation}]{\includegraphics[width = 0.24\textwidth]{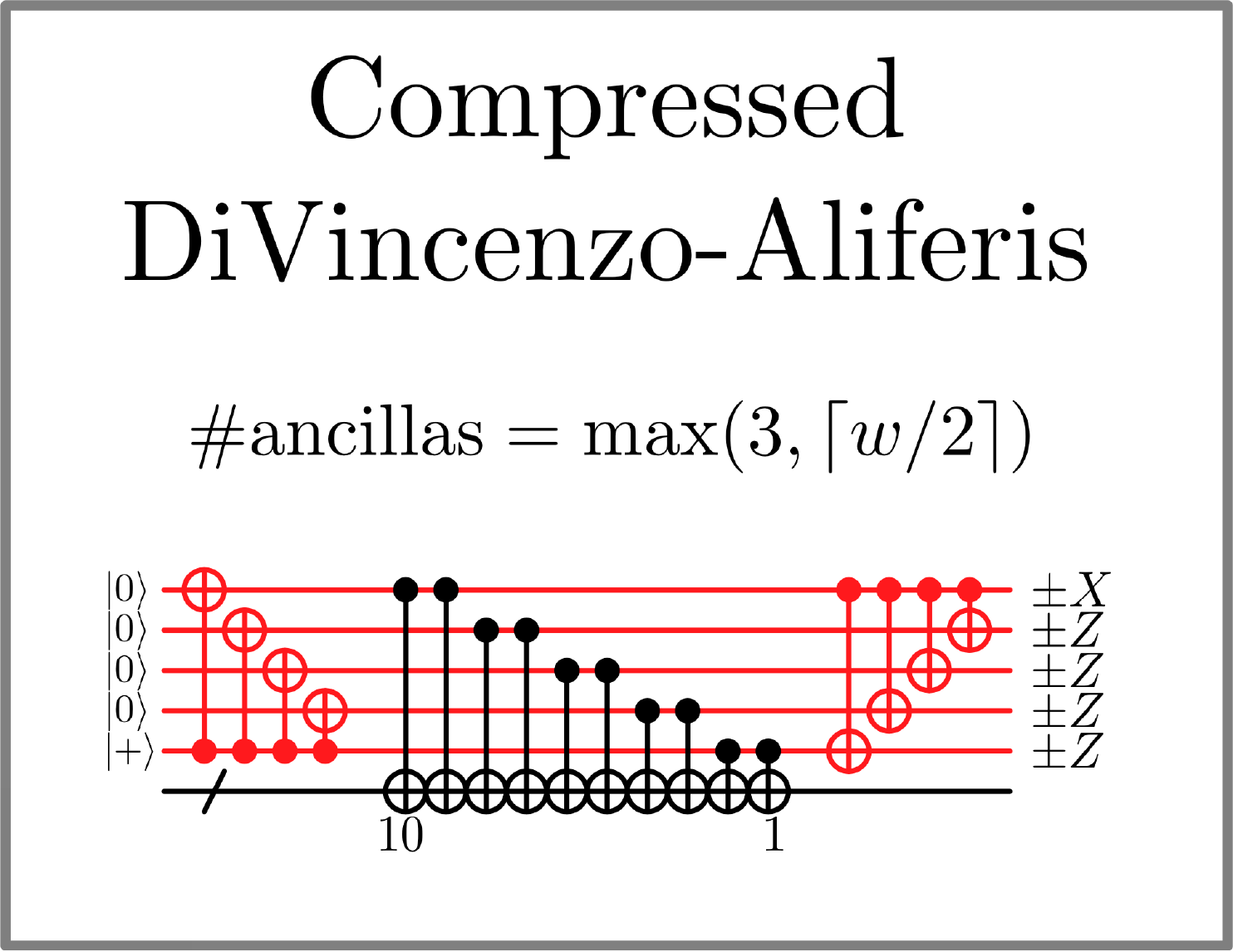}}
\subfigure[\label{f:Flagcard}]{\includegraphics[width = 0.24\textwidth]{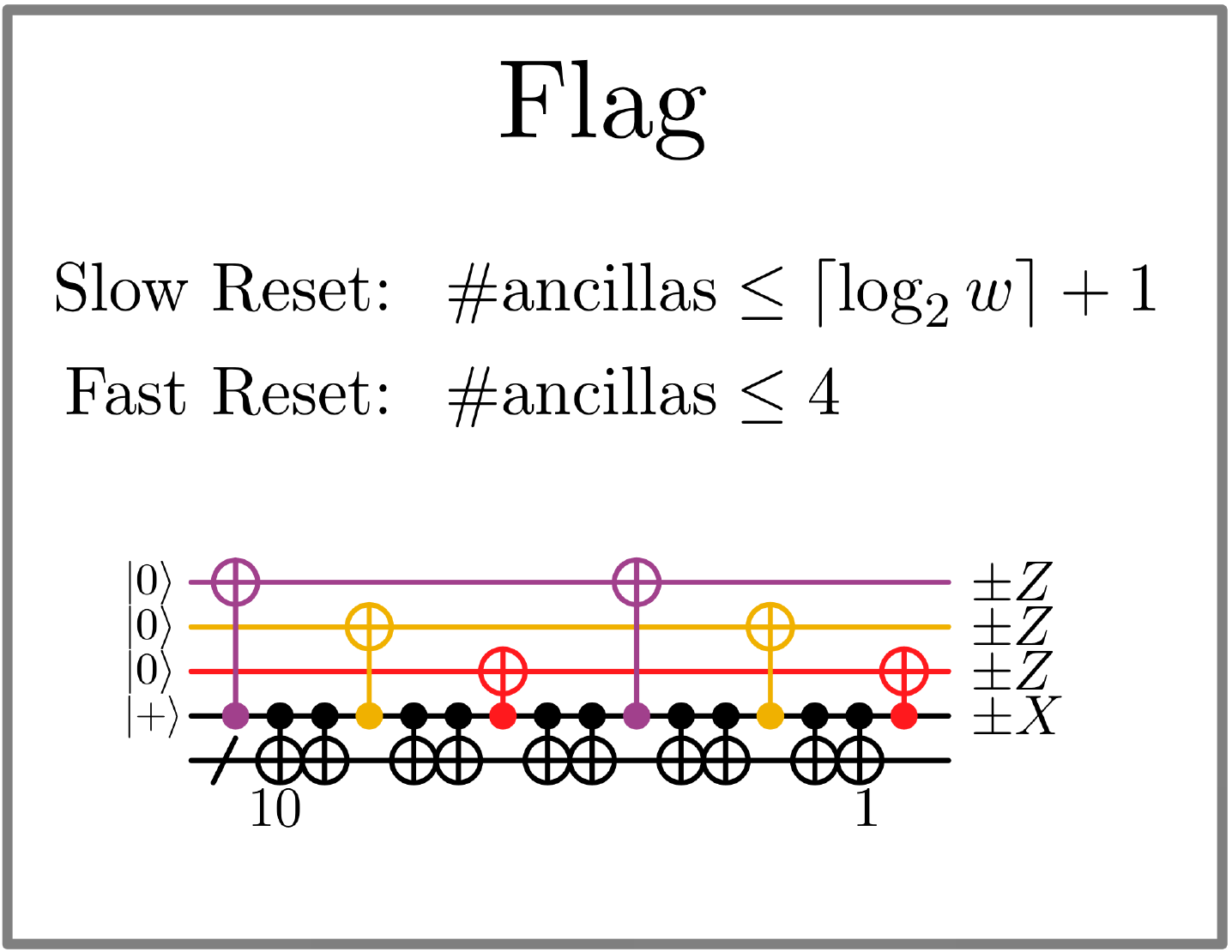}}
\caption{Historical progression of stabilizer measurement circuits, illustrated by a weight-$10$ $X$ stabilizer measurement.  
The black CNOTs have targets on the 10 data qubits, collectively represented by a black wire.  In~(b-d), fault-tolerance is only guaranteed to distance three and Pauli corrections, or frame updates, are applied to the data based on the $Z$ basis measurements.
(a)~Shor's method uses $w+1$ ancillas and requires a fault-tolerantly prepared cat state. 
(b,c)~These methods use unverified cat states with subsequent error decoding, 
giving a deterministic circuit.
(d)~Our flag method prepares and unprepares an ancilla cat state while collecting the stabilizer.  Exponentially more flag patterns can thus be accessed for fault diagnosis.
}
\label{f:stabmeascomparison}
\vspace{-0.5cm}
\end{figure*}

We strive for low qubit overhead since quantum computers with limited qubits count resources preciously, and even minor improvements can free up extra qubits for other tasks. In topological codes where stabilizers are localized in space and are of  low weight, only a few flag qubits close to each stabilizer suffice to impart fault tolerance~\cite{YoderKim16trianglecodes, Chamberlandtriangularcodesflag2020, Chamberlandtopologicalsubsystemflag2020}.  It has also been shown that with adaptive control and quickly resetting qubits, only four ancillas are required for the universal fault-tolerant operation of some distance-three codes~\cite{ChaoReichardt17errorcorrection, ChaoReichardt18fewqubitcomputation}.  In this paper, we present a general fault-tolerant protocol that works for a stabilizer of any size.  If qubits are connected well enough, we show that only logarithmic overhead is required for fault-tolerant stabilizer measurement, an exponential space improvement over the previous linear overhead.

The general model of flag-based fault tolerance is displayed in \figref{f:genove}. Here a set of flag ancilla qubits monitor operations in a non-fault-tolerant circuit and when measured at the end, produce flag patterns that identify mid-circuit faults. Based on the observed flag pattern, a correction is applied to the data to minimize the spread of errors. As an example, \figref{f:syndromemeasurementd3slowresetw10} measures a stabilizer on 10 data qubits while tolerating one fault. The three colored qubits are the flags and the measured flag patterns each imply different corrections.  Also note that the sequence of flag patterns $100, 110, 111, 011, 001$ is a path on the hypercube and corresponds to the order of the flag CNOTs, e.g., between $100$ and $110$ a CNOT targets flag qubit~$2$.

In this paper, we restrict discussion to the measurement of individual stabilizers of a quantum code, as in Shor-style fault-tolerant stabilizer measurement~\cite{Shor96}. We do not consider measuring multiple stabilizers in parallel, as in Refs.~\cite{Steane, Knill05, huang2021}.  \fullfigref{f:stabmeascomparison} displays improvements made to Shor's method.  Note that Shor's method can tolerate any number of faults by increasing the fault tolerance of the cat state preparation.  The subsequent schemes forgo this property and are only fault-tolerant to distance three.  DiVincenzo and Aliferis first make the circuit deterministic by removing the need for cat state verification~\cite{DiVincenzoAliferis06slow}. This ensures that a circuit designer need not wait for a fault-tolerantly prepared cat state before measuring the stabilizer.  Subsequent improvements were made in Refs.~\cite{Stephens14colorcodeft, YoderKim16trianglecodes, ChaoReichardt18fewqubitcomputation} to reduce ancilla count by coupling each ancilla qubit to two data qubits instead of one.  

With our flag method, the ancilla cat state is prepared and unprepared while collecting the stabilizer.  As in \figref{f:syndromemeasurementd3slowresetw10}, an $X$ fault occurring anywhere on the $\ket +$ qubit may spread into the data, but will also leave its imprint on the flags. This is then measured out as a flag pattern. Due to the particular arrangement of the flag CNOTs, any fault that can spread to a data error of weight more than one triggers one of the five shown flag patterns. Each flag pattern then applies a unique correction that ensures that there is at most one data qubit in error.  This satisfies the condition for fault tolerance, which states that $k$ faults in a circuit should cause no more than $k$ qubits to have~errors. 

For the distance-three fault-tolerant measurement of a weight-$w$ stabilizer, we propose two methods based on the speed of qubit reset. With fast qubit reset, \thmref{t:fastresetd3syndromemeasurement}, only three flag ancillas are required in total, but each flag needs to be measured once per four data qubits.  If more flags are used in parallel, the number of accessible flag patterns grows exponentially and the number of measurements per ancilla converges to one. This is the regime of slow qubit reset, \thmref{t:syndromemeasurementd3slowreset}, which uses at most $\lceil \log_2 w \rceil$ flag ancillas measured only at the end.  Additionally, we show circuits for distance-five and distance-seven fault-tolerant stabilizer measurement in \appref{app:d5d7}.

\begin{table}
\caption{\label{f:resultsCSP}
Cat state size for different preparation methods that use $m$ ancilla qubit measurements.  
}
\begin{tabular}{ c @{\hspace{.25cm}} c }
\hline \hline
\textbf{Method} & \textbf{Cat state size~$w$} \\
\hline 
& \\[-.3cm]
 \textit{Deterministic} Correction& $w \leq 3 \, (2^m - 2m + 2)$ \\
(\thmref{t:catstated3}) & 
$\depth = (w - 1) + 2^{m - 2}$
\\[.15cm]
\textit{Adaptive} Correction &  $w \leq 3 \, (2^m - 2m + 3)$ \\
(\thmref{t:adaptiveslowresetd3}) & \\[.15cm]
 Error Detection &  $w \leq 3 \cdot 2^{m -1}$ \\
(\thmref{t:errordetcatd3}) & \\[.15cm]
\textit{Parallelized} Correction & $w = 2m = 2 \cdot 2^j , j\in \mathbb{N}$ \\
(\thmref{t:parallelcatd3s}) & $\depth = 2 + \log_2 w$\\
\hline \hline
& \\[-0.6cm]
\end{tabular}
\end{table}

\tabref{f:resultsCSP} contains bounds on the ancilla overhead for preparing weight-$w$ cat states fault-tolerantly to distance-three.  If the flag qubits can reset quickly, \thmref{t:catstated3} states that only one flag qubit is required and it needs to be reset and measured $m$ times.  Since the flag qubits operate independently, it is also possible to use $m$ flag qubits, with each one being measured once.  We further show how to use an adaptive circuit in \thmref{t:adaptiveslowresetd3} to marginally increase the number of flag patterns in use.

Appendices~\ref{s:d3ED} and~\ref{s:d3Par} contain two additional circuits for distance-three weight-$w$ cat state preparation. In \thmref{t:errordetcatd3}, we show how to use postselection to prepare cat states while tolerating \textit{two} faults.  Finally \thmref{t:parallelcatd3s} details how to create low-depth circuits for distance-three fault-tolerant cat state preparation, which may be useful in technologies with many qubits or long two-qubit gate~times. 

The rest of the paper is divided into three sections. \secref{s:flagseq} details the construction of the two paths on the hypercube that we use as flag sequences. \secref{s:d3syndmeas} describes how to use these sequences for distance-three fault-tolerant stabilizer measurement, and \secref{s:d3csp} deals with cat state preparation.

\section{Flag sequences}
\label{s:flagseq}

A flag pattern is a string of $1$s and $0$s that arises from measuring flag qubits.  A flag pattern with $a$ flags is a vertex of the $a$-dimensional hypercube~$\{0,1\}^a$.  We show how to construct two maximal-length paths through the hypercube.  
Between sequential flag patterns only one bit changes, which in the fault-tolerant circuit constructions below will correspond to a CNOT from the syndrome qubit to that flag qubit.

The first type of flag sequence just requires a maximal-length traversal of the $a$-dimensional hypercube. A simple choice is the Gray code~\cite{Gray1953, Gardner1986Doughnuts}.  

\begin{lemma} \label{t:graycode}
For $a \geq 1$, the Gray code creates a length-$2^a$ Hamming path in the $a$-dimensional hypercube~$\{0,1\}^a$.
\end{lemma}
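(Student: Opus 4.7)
The plan is a straightforward induction on $a$, using the binary-reflected Gray code construction as the explicit witness. Let $G_a$ denote the sequence of $2^a$ binary strings to be produced. For the base case $a=1$, take $G_1 = (0, 1)$, which is manifestly a Hamming path covering both vertices of $\{0,1\}$.

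For the inductive step, assume $G_{a-1} = (v_1, v_2, \dots, v_{2^{a-1}})$ is a Hamming path through $\{0,1\}^{a-1}$. Define
\[
G_a = (0v_1,\, 0v_2,\, \dots,\, 0v_{2^{a-1}},\, 1v_{2^{a-1}},\, 1v_{2^{a-1}-1},\, \dots,\, 1v_1),
\]
i.e.\ prepend $0$ to $G_{a-1}$, then prepend $1$ to the reversal of $G_{a-1}$, and concatenate. First I would verify that $G_a$ has length $2^a$ and visits each vertex of $\{0,1\}^a$ exactly once: the first half covers all strings with leading bit $0$ (by the inductive hypothesis applied to the last $a-1$ bits), and the second half covers all strings with leading bit $1$ likewise, with no repeats.

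Next I would verify the Hamming-distance-$1$ property between consecutive entries. Within the first half, consecutive entries $0v_i$ and $0v_{i+1}$ differ only in the last $a-1$ bits, and by the inductive hypothesis that difference is a single bit flip. The same argument applies within the second half, since reversing a Hamming path is still a Hamming path. The only non-inductive case is the junction between the two halves, namely $0v_{2^{a-1}}$ and $1v_{2^{a-1}}$; here the last $a-1$ bits agree exactly because the second half starts from the reversal, so only the leading bit flips. This is the one place the reflection trick is essential, and it is the only real observation in the proof.

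I do not anticipate any genuine obstacle: the argument is classical and the reflection construction makes every step immediate. The only thing to be careful about is pinning down ``Gray code'' concretely enough that the junction argument works — i.e.\ using the reflected Gray code rather than merely asserting the existence of some Gray code.
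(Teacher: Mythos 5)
Your proposal is correct and is essentially the paper's own proof: both use the binary-reflected Gray code, running the $(a-1)$-sequence with a $0$ adjoined and then its reversal with a $1$ adjoined. You simply spell out the coverage and junction checks that the paper leaves implicit.
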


\begin{proof}
We construct the sequence inductively.
For $a = 1$, use $0, 1$.  For $a > 1$, first run the sequence for $a-1$ with $0$s appended, then run it backwards with $1$s appended.  
\end{proof}

\noindent
For $a = 2$, e.g., the sequence is $00, 10, 11, 01$.  For $a = 3$, the sequence is $000, 100, 110, 010, 011, 111, 101, 001$.  

\smallskip

The second type of sequence is related to the degree of fault tolerance of the circuit.  By definition, fault tolerance to distance $d$ implies that for all $k \leq t =  \lfloor \frac{d-1}{2} \rfloor$, correlated errors of weight $k$ occur with $k$-th order probability.  For distance-three Calderbank-Shor-Steane (CSS) fault-tolerant syndrome measurement, any single fault should result in a data error with $X$ and $Z$ components having weight zero or one.  

In order to ensure that the circuit is distance-three fault-tolerant, we need to ensure that a measurement fault on any one ancilla qubit does not trigger corrections of weight greater than one.  Hence the second maximal-length sequence requires that there are no weight-one strings except at the start and end.  As shown in \figref{f:syndromemeasurementd3slowresetw10}, we may assign weight-one corrections to these two patterns, but for all the others, multi-qubit corrections~are~required.

\begin{figure}
\hspace{-0.25cm}
\includegraphics[width = 0.49\textwidth]{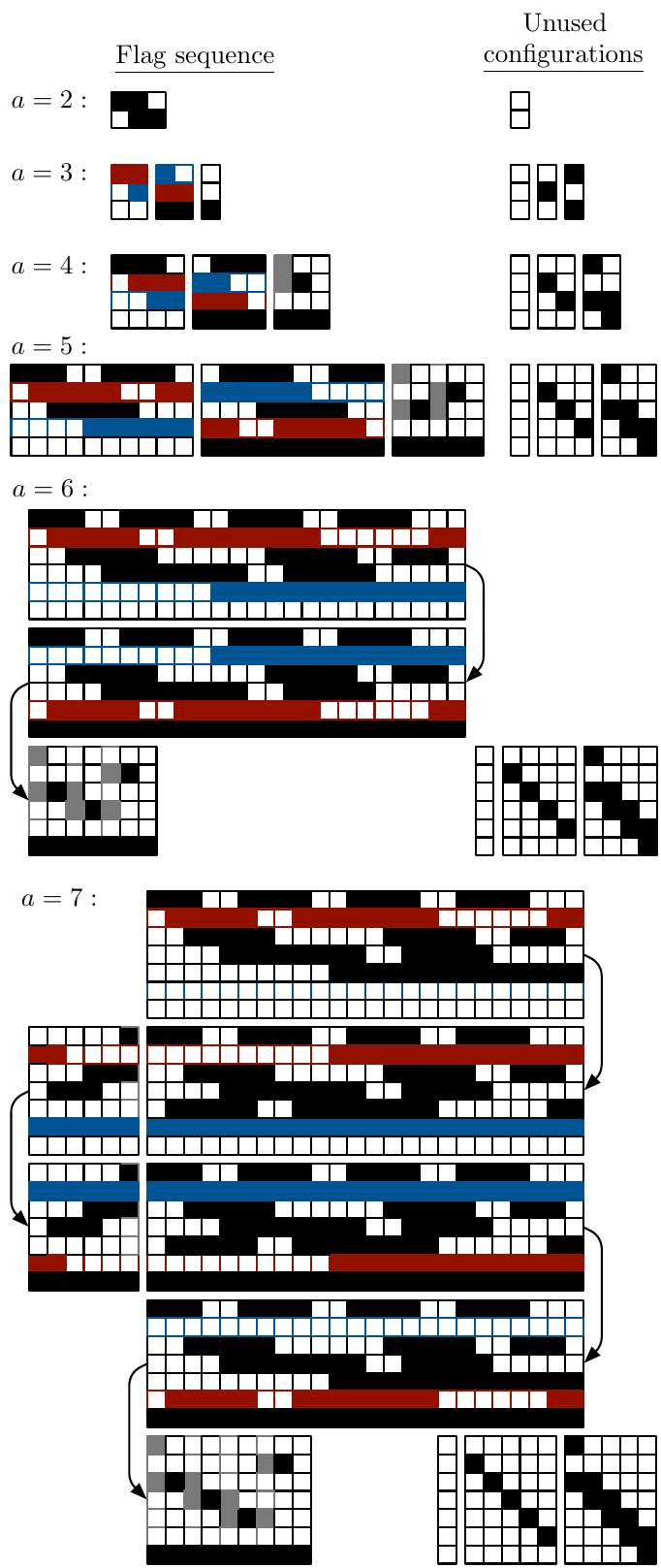}
\caption{
Flag sequences for distance-three fault-tolerant syndrome measurement, using $a$ flag qubits, each measured once (the slow reset model).  These sequences are walks through the $a$-dimensional hypercube, from $10^{a-1}$ to $0^{a-1}1$; passing through each vertex at most once and no other weight-one vertices.  Flag patterns are stacked vertically and ordered initially left to right, with solid and empty squares representing $1$ and $0$, respectively, e.g., \protect\raisebox{-.05cm}{\protect\includegraphics[scale=.4]{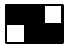}} represents $10, 11, 01$.} 
\label{f:slowresetdistance3flagsequences}
\end{figure}

\begin{lemma} \label{t:slowresetdistance3flagsequences}
For $a \geq 2$, in the $a$-dimensional hypercube $\{0,1\}^a$ there exists a path $v_1 = 10^{a-1}, \ldots, v_n = 0^{a-1}1$, with length $n = 2^a - 2 a + 3$, such that all $v_2, \ldots, v_{n-1}$ have weight at least two, and none repeat.  
\end{lemma}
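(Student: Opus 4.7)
The plan is to proceed by induction on $a$. For the base case $a = 2$, the sequence $(10, 11, 01)$ has length $3 = 2^2 - 2\cdot 2 + 3$, its only interior vertex $11$ has weight~$2$, and the endpoints match the lemma.

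For the inductive step, suppose a path $P_{a-1} = (v_1, v_2, \ldots, v_{n_{a-1}})$ with $n_{a-1} = 2^{a-1} - 2(a-1) + 3$ has been constructed in $\{0,1\}^{a-1}$ with the stated properties. I would build $P_a$ in two stages joined by a single edge that flips the leading bit. The first stage is a Hamiltonian path through the subcube of $\{0,1\}^a$ with leading bit $1$, from $10^{a-1}$ to $1 v_3$. The second stage is the sequence $(0 v_3, 0 v_4, \ldots, 0 v_{n_{a-1}})$, obtained by prepending a $0$ to the last $n_{a-1} - 2$ vertices of $P_{a-1}$. Because $0 v_{n_{a-1}} = 0 \cdot 0^{a-2} 1 = 0^{a-1} 1$, the combined walk ends at the required vertex, and the total length is $2^{a-1} + (n_{a-1} - 2) = 2^a - 2a + 3$.

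The first stage exists because $v_3$ is an interior vertex of $P_{a-1}$ and so has weight at least~$2$ in $\{0,1\}^{a-1}$; since the hypercube is bipartite and $v_3$ lies two edges from $v_1 = 10^{a-2}$, the weight-parities of $v_3$ and $v_1$ agree, so $v_3$ has odd weight. Hence $0^{a-1}$ and $v_3$ lie in opposite bipartition classes, and the classical Hamiltonian laceability of the $(a-1)$-dimensional hypercube (for $a-1 \geq 2$) furnishes a Hamiltonian path between them, which prepending~$1$ lifts to the required first stage. Along that stage, every vertex except $10^{a-1}$ has leading bit $1$ together with at least one other $1$, so its weight is at least $2$. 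Along the second stage, each $v_i$ with $3 \leq i \leq n_{a-1} - 1$ is interior to $P_{a-1}$ and hence has weight at least~$2$ in $\{0,1\}^{a-1}$, and prepending~$0$ preserves this weight. The two stages live in disjoint subcubes of $\{0,1\}^a$, and $P_{a-1}$ has no repeated vertices, so $P_a$ has none either.

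The main obstacle I anticipate is the appeal to Hamiltonian laceability of the hypercube; this is a classical fact but is not completely trivial, and if a self-contained treatment were desired it could be established by a parallel induction along the same recursion. A minor subtlety is that $v_3$ is defined only when $n_{a-1} \geq 3$, which holds for every $a \geq 3$, precisely the range required by the inductive step.
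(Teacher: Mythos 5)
Your construction is correct, and it is genuinely different from the paper's. The paper builds the path by an explicit self-contained recursion: it runs the $(a-1)$-path with $0$s appended up to its second-to-last vertex, then runs it backward with $1$s appended (after swapping two coordinates), and finishes with an explicit tail of weight-two and weight-three vertices; the length and the no-repeat property are certified by tracking exactly the $2a-3$ unvisited vertices ($0^a$, $a-2$ weight-one strings, and $a-2$ weight-two strings). You instead split $\{0,1\}^a$ into the two half-cubes by the leading bit, exhaust the leading-bit-$1$ half with a Hamiltonian path from $10^{a-1}$ to $1v_3$ (every non-initial vertex there automatically has weight $\geq 2$), and then reuse the tail $v_3,\ldots,v_{n_{a-1}}$ of the inductive path in the leading-bit-$0$ half; the count $2^{a-1}+(n_{a-1}-2)=2^a-2a+3$ and the parity argument for the existence of the Hamiltonian path both check out. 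Your route is conceptually cleaner but outsources the combinatorial core to Hamiltonian laceability of $Q_{a-1}$, which is classical but itself needs an induction of comparable weight; the paper's version is longer but fully explicit, which matters here because the flag sequences must actually be compiled into circuits. Two small points: for $a=3$ one has $n_2=3$, so $v_3$ is the \emph{endpoint} of $P_2$ (weight one), not an interior vertex as you assert; this is harmless because $0v_3=0^{a-1}1$ is then the final endpoint of $P_a$, but the justification should be restated. Also, the fact that $v_3$ has weight at least two is never actually needed for the first stage to exist --- only its odd parity is, and oddness already guarantees $v_3\neq 0^{a-2}\cdots$, i.e., $1v_3\neq 10^{a-1}$ and $1v_3$ has weight at least two.
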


\begin{proof}
\def\indicator{{1}}
Let $\indicator_S \in \{0,1\}^a$ denote the vertex that is $1$ exactly for indices in $S$.  
\fullfigref{f:slowresetdistance3flagsequences} illustrates the inductive construction.  For $a=2$, the sequence is the same as that in \lemref{t:graycode}. The base case of our inductive proof is with $a=3$, where the sequence is $100, 110, 111, 011, 001$. For $a > 3$, first run the previous sequence for $b = a - 1$ with $0$s added to the bottom, up to the second-to-last element 
(which for $b \geq 3$ is $\indicator_{\{2, b\}}$).  Then run the sequence backward, except with $1$s added to the bottom, and swapping coordinates $2$ and $a-1$ (the red and blue rows in the figure).  Finally, finish the sequence from $\indicator_{\{1, a\}}$ by walking through $\indicator_{\{3, a\}}, \indicator_{\{4, a\}}, \ldots, \indicator_{\{a - 2, a\}}, \indicator_{\{2, a\}}$, with the appropriate weight-three sequences $\indicator_{\{1, 3, a\}}, \indicator_{\{3, 4, a\}}, \ldots,$ $\indicator_{\{a - 1, a - 2, a\}}, \indicator_{\{a - 2, 2, a\}}$ (shown in gray) interposed.

To ensure that no vertex is visited more than once, one need only check that the last $2a - 5$ sequences are distinct from those that came before.  For this, one can track by induction the $2a - 3$ hypercube vertices that are not visited by each walk: $0^a$, the $a - 2$ weight-one strings $\indicator_2, \ldots, \indicator_{a - 1}$, and the $a - 2$ weight-two strings $\indicator_{\{1,3\}}, \indicator_{\{3,4\}}, \indicator_{\{4,5\}}, \ldots, \indicator_{\{a - 1, a\}}$.  
Thus, the sequence has total length $2^a - (2 a - 3)$.  
\end{proof}

The length $2^a - 2a + 3$ is maximal.  
This follows since there are $2^{a-1} - a$ vertices with odd weight more than one, and vertices must alternate odd and even~weights.  

\section{Distance-three stabilizer measurement}
\label{s:d3syndmeas}

In this section, we outline two protocols for distance-three CSS fault-tolerant stabilizer measurement.  They differ based on the speed of qubit measurement and reset.

For $w \in \{4, 5, 6\}$, flag-fault-tolerant circuits are constructed the same way regardless of qubit reset speed.  We show in \figref{f:slowresetd3w6stabilizermeasurement} that for $w = 6$, only two flag qubits are required.  Lower-weight stabilizers can be measured by removing data CNOTs and making appropriate changes to the Pauli corrections.  For $7 \leq w \leq 10$, the different constructions yield the same circuits.  It is only for $w > 10$ that the effects of qubit reset speed are~pronounced.

\subsection{Fast reset}

\begin{theorem} \label{t:fastresetd3syndromemeasurement}
If qubits can be measured and reset quickly, then for any $w$, four ancilla qubits are sufficient to measure the syndrome of $X^{\otimes w}$, CSS fault-tolerantly to distance three.  Moreover, the number of measurements needed is $\lceil \tfrac{w + 2}{4} \rceil + 1$.  
\end{theorem}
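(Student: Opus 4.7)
The plan is to extend the slow-reset construction from \thmref{t:syndromemeasurementd3slowreset} by reusing flag qubits via fast measurement and reset. I would use one syndrome qubit initialized in $\ket{+}$ plus three flag qubits, with the $w$ data CNOTs interleaved between a sequence of flag CNOTs from the syndrome qubit. The flag qubits, possibly measured several times, record the propagation of any single $X$ fault from the syndrome qubit into the data block, and a correction of weight at most one is applied based on the full measurement record.

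For the base case $w \le 10$, I would invoke \thmref{t:syndromemeasurementd3slowreset} directly using the 5-vertex flag sequence of \lemref{t:slowresetdistance3flagsequences} with $a=3$; the three flag qubits are each measured once, giving $1+3 = 4 = \lceil (10+2)/4 \rceil + 1$ measurements and no reset. For $w > 10$, I would extend the walk on the 3-dimensional hypercube past the endpoint $001$: once a flag has returned to $0$, it can be measured and reset, and then raised again as part of a subsequent sub-block of four additional data CNOTs. Each sub-block contributes exactly one extra flag measurement while covering four more data qubits, yielding the rate $\lceil (w+2)/4 \rceil + 1$ after accounting for the initial and final partial sub-blocks (whose boundary conditions match the weight-one endpoints $10^{a-1}$ and $0^{a-1}1$ of the underlying flag path and account for the $+2$ offset).

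To verify distance-three fault tolerance, I would enumerate single faults by location: a CNOT that spreads an $X$ from the syndrome qubit into the data; a flag preparation, measurement, or reset error; and a direct data-qubit error. In each case, the complete record of flag measurements across all reset cycles must either identify a unique weight-$\le 1$ correction or be consistent with no correction. The key is to arrange that two distinct single-fault scenarios either produce distinct flag records or demand identical corrections. The main obstacle I anticipate is showing that fault tolerance survives a reset boundary: a fault occurring just before a flag measurement and one occurring just after that flag's reset can a priori produce identical subsequent records. Resolving this will require interposing enough intermediate flag structure (or restricting which flag is reset at which point) so that the pre-reset flag readout, together with the continuing record, pins down the fault location unambiguously. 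Designing the block pattern that saturates the four-data-CNOTs-per-measurement rate while enforcing this disambiguation is the combinatorial core of the proof.
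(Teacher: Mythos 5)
Your overall architecture matches the paper's: one syndrome qubit plus three flags, the slow-reset circuit as the base case for small $w$, and for larger $w$ a repeated block in which one flag is measured and reinitialized per four additional data CNOTs (this is exactly the highlighted region of \figref{f:syndromemeasurementd3fastreset}, with $w = 4(m-1)-2$ giving the stated measurement count). But two things keep your proposal from being a proof. First, the corrections in these flag schemes are \emph{not} of weight at most one: a single fault on the control wire may already have spread to many data qubits by the time it is flagged, and the applied correction is the multi-qubit operator $X$ on all data qubits downstream of the presumed fault location. Only the \emph{residual} error (correction composed with the actual error) must have weight at most one. Your requirements that ``a correction of weight at most one is applied'' and that the record ``identify a unique weight-$\le 1$ correction'' are therefore the wrong target, and the verification plan built on them cannot go through.

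Second, and more importantly, you explicitly leave the combinatorial core unresolved, and the disambiguation you ask for is both unnecessary and unattainable at the claimed rate. The paper does not pin down the fault location exactly: at every gap in the control wire, the set of flag measurements that a fault there would flip has weight two or three, alternating, so each achievable measurement record is shared by three consecutive locations --- before, between, and after a pair of data CNOTs. The three candidate errors are then $E \cup \{d_1, d_2\}$, $E \cup \{d_2\}$, and $E$ for some common tail $E$, so applying the single correction associated with the middle location leaves a residual of weight at most one in all three cases; meanwhile a single faulty flag preparation or measurement produces a weight-one record, to which no multi-qubit correction is ever assigned. This window-of-three observation, enforced across reset boundaries by the alternating weight-two/weight-three live-flag structure, is precisely what dissolves the ``reset boundary'' obstacle you correctly identify but do not overcome.
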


\begin{proof}
For $w \in \{ 4, 5, 6\}$, the circuit using two flag ancillas is shown in \figref{f:slowresetd3w6stabilizermeasurement}.  It runs through a sequence of three flag patterns and a multi-qubit correction is only applied for the flag pattern $11$.  For $w > 6$, the general construction is shown in \figref{f:syndromemeasurementd3fastreset}.  Each repetition of the highlighted region adds the $X$ parity of four more data qubits, while measuring and quickly reinitializing one flag qubit.  In terms of the number of measurements~$m$, the construction achieves up to $w = 4 \, (m - 1) - 2$.  
It is fault-tolerant because $X$ faults on the control wire cause flag patterns of alternating weights two or three, that localize the fault to three possible consecutive locations along the control wire: before, between or after two CNOT gates.  The appropriate correction, ensuring distance-three fault tolerance, is for a fault between the CNOT gates.  
\end{proof}

\begin{figure}
\centering
\subfigure[\label{f:slowresetd3w6stabilizermeasurement}]{\includegraphics[width = 0.35\textwidth]{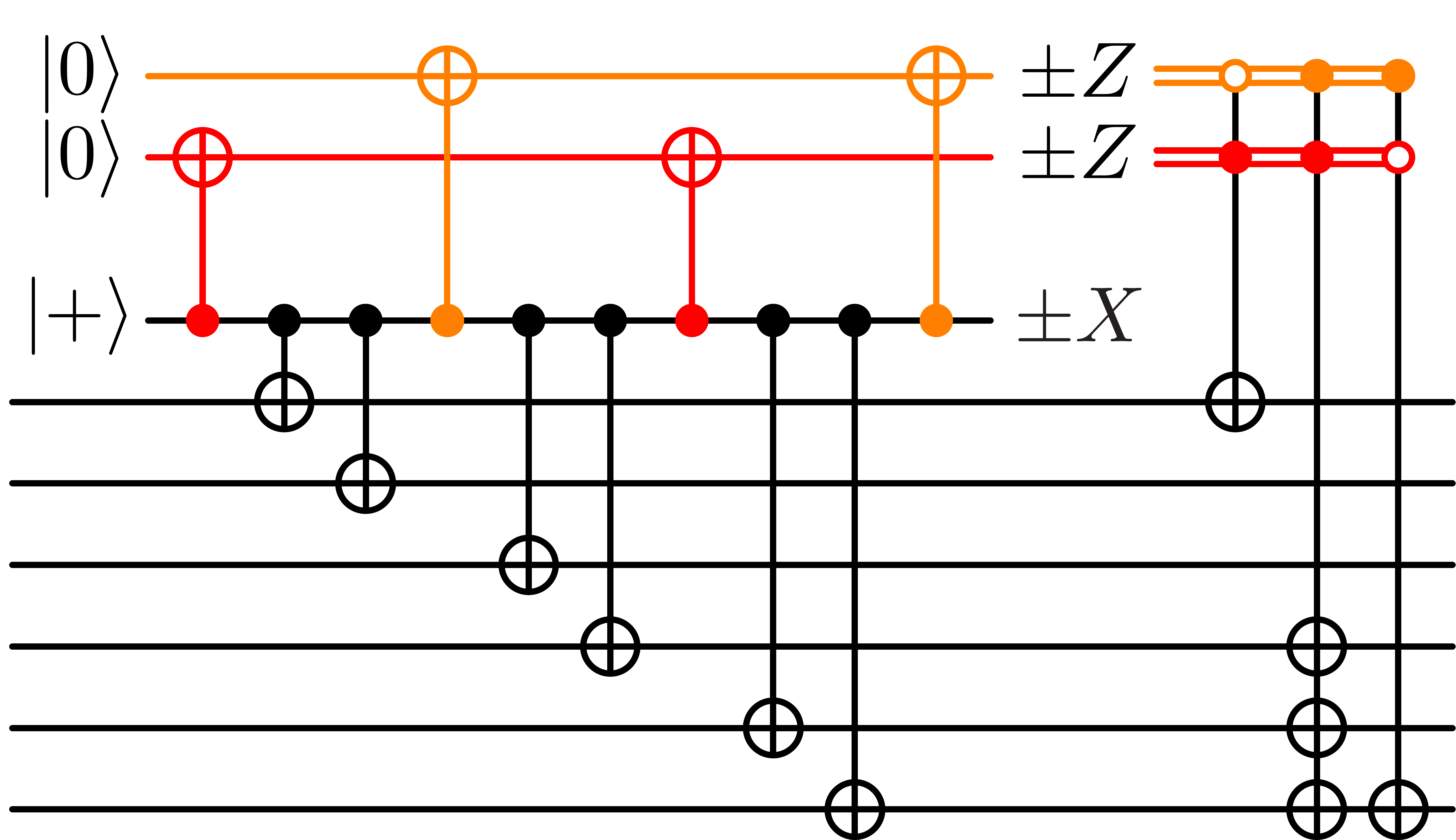}}
\subfigure[\label{f:slowresetd3w6CSP}]{\includegraphics[width = 0.35\textwidth]{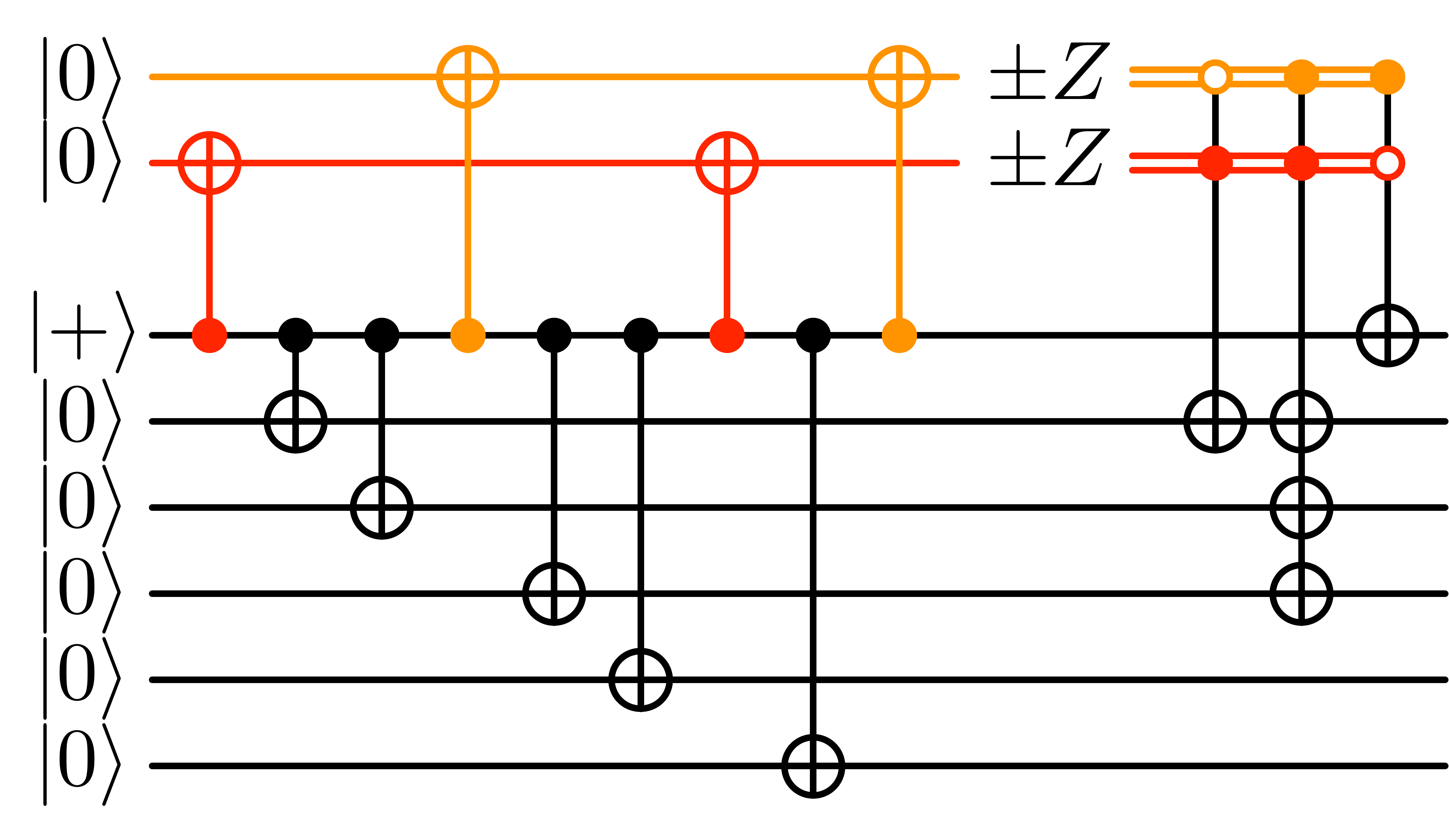}}
\caption{
(a) Circuit to measure an $X^{\otimes 6}$ stabilizer, CSS fault-tolerant to distance three.  
(b) Circuit to prepare a six-qubit cat state, fault-tolerant to distance three.  
}
\label{f:distance3w6}
\end{figure}

\begin{figure*}
\centering
\includegraphics[scale=.75]{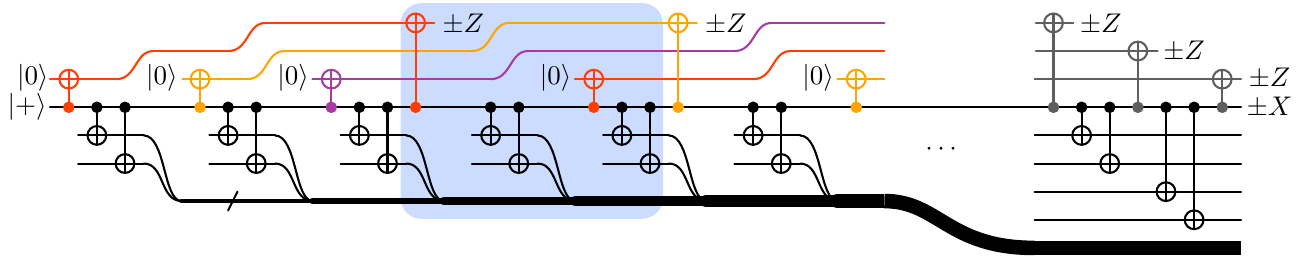}
\caption{
Distance-three fault-tolerant syndrome bit measurement only needs three flag qubits.
The highlighted region can be repeated to fit the weight of the stabilizer being measured.}
\label{f:syndromemeasurementd3fastreset}
\end{figure*}

\begin{figure}
\centering
\includegraphics[width=.32\textwidth]{oneflagunprotected}\vspace{0.2cm}
\includegraphics[width=.28\textwidth]{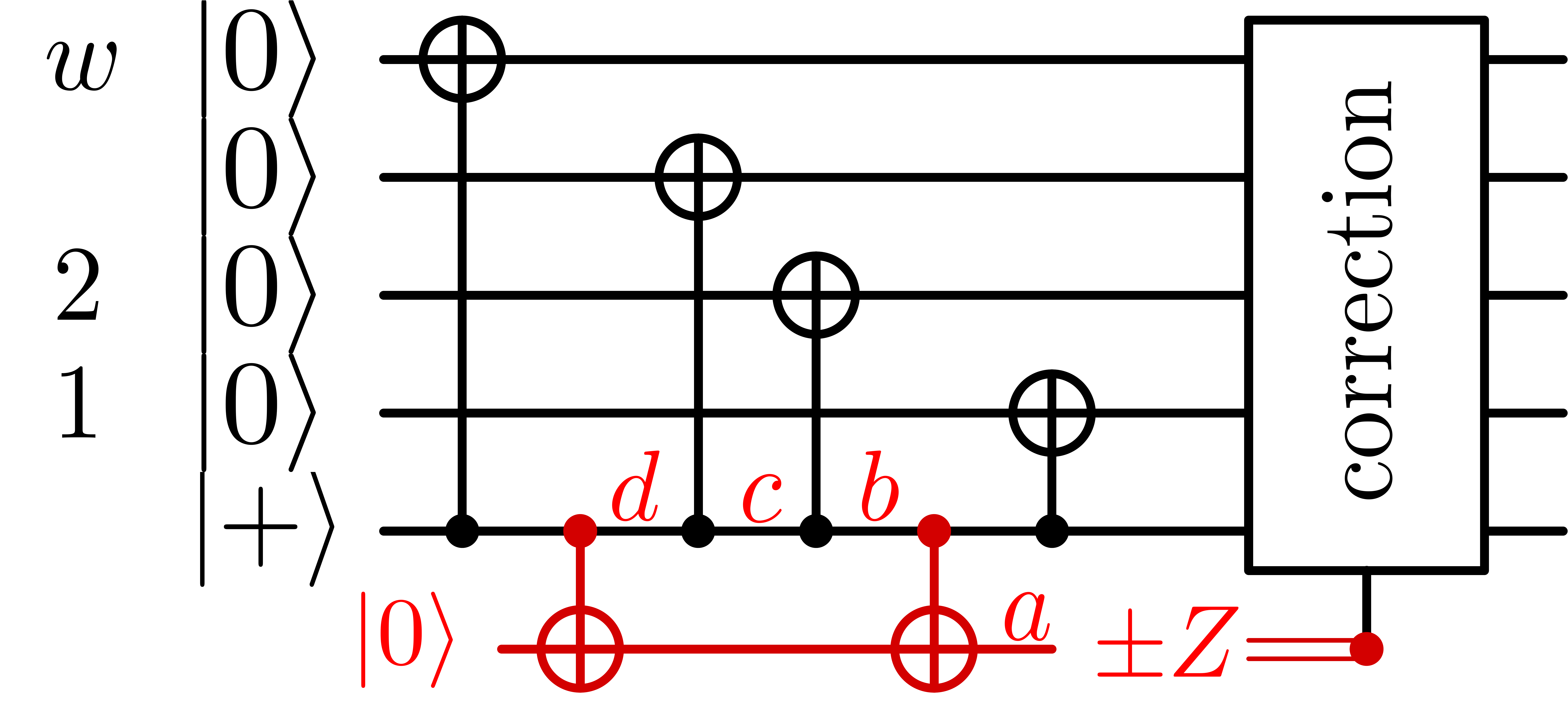} 
\caption{Distance-three error correction is not possible with one flag qubit.  Either (top) the control wire is unprotected at some point ${\color{red} \star}$, from which an $X$ fault can propagate to an error of weight at least two; or (bottom) faults at ${\color{red} a}$, ${\color{red} b}$, ${\color{red} c}$, ${\color{red} d}$, causing respective errors $I$, $X_1$, $X_1 X_2$, $X_w$ have no consistent~correction.} \label{f:oneflagprotection}
\end{figure}

\thmref{t:fastresetd3syndromemeasurement} may be optimal; it does not appear to be possible to use fewer than three flag qubits.  With just one flag qubit, one can detect that an error has occurred, but not where.  As illustrated in \figref{f:oneflagprotection}, either the control wire is unprotected at some point or for $w \geq 4$ there is no consistent correction rule.  

By a similar argument, two flag qubits are not enough.  Any correction based on a single flag can have weight at most one, since the flag measurement itself could be faulty.  However, if at some point in the middle the control wire is protected by just a single flag, a weight-one correction will not suffice.  On the other hand, if both flags are used to protect the control wire across the entire sequence of CNOT gates, we are unable to locate faults well enough to correct~them.  

\pagebreak 

We remark that this construction can also be used to prepare a $w$-qubit cat state fault-tolerantly to distance three. The conversion follows three steps: 1.\ Remove one data qubit.  2.\ Initialize the data qubits as $\ket 0$.  3.\ Remove the syndrome ancilla measurement, so as to retain it in the support of the stabilizer.  An example of this conversion is shown for $w = 6$ in \figref{f:slowresetd3w6CSP}.  In \secref{s:d3csp}, we will give a better protocol that uses just one ancilla qubit.

\subsection{Slow reset}
\label{subsec:stabmeasslow}

\begin{theorem} \label{t:syndromemeasurementd3slowreset}
The syndrome of $X^{\otimes w}$ can be measured CSS fault-tolerantly to distance three using $m \geq 3$ measurements, provided that 
\begin{equation*}
w \leq 2 \, (2^{m - 1} - 2 (m - 1) + 3) \, .
\end{equation*}
\end{theorem}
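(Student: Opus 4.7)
The plan is to turn the hypercube path of \lemref{t:slowresetdistance3flagsequences} into a CNOT schedule for a Shor-style syndrome extraction circuit. I would use one $\ket{+}$ syndrome ancilla, measured in the $X$ basis, together with $a = m - 1$ flag qubits, each measured once in the $Z$ basis, for a total of $m$ measurements. Writing $v_1 = 10^{a-1},\ldots,v_n = 0^{a-1}1$ for the path of length $n = 2^{m-1} - 2(m-1) + 3$, I organize the circuit into $n$ time slots separated by $n-1$ flag CNOTs from the syndrome qubit: in time slot $k$ the syndrome qubit performs CNOTs onto two fresh data qubits, and between time slots $k$ and $k+1$ a single CNOT targets the unique flag coordinate on which $v_k$ and $v_{k+1}$ disagree. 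Short boundary CNOTs before and after the main body ensure that, in the absence of any fault, all flags return to $\ket{0}$, and that an $X$ fault on the syndrome wire during time slot $k$ produces exactly the flag pattern $v_k$ at readout. Two data CNOTs per time slot then fit $w = 2n = 2(2^{m-1} - 2(m-1) + 3)$ data qubits, matching the stated bound.

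Next I would specify the decoder: each $v_k$ is assigned the unique correction, computed modulo the stabilizer $X^{\otimes w}$, that cancels the data error propagated from an $X$ fault in time slot $k$ down to at most one residual qubit, namely the ``ambiguous'' data qubit lying between the two CNOTs of slot $k$. Patterns not lying on the path trigger no correction. Verifying that this handles lone data $X$ faults, $Z$ faults, syndrome-measurement faults, and two-qubit CNOT-gate faults is essentially the same case analysis as in the fast-reset setting of \thmref{t:fastresetd3syndromemeasurement}, so I would reuse it.

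The main obstacle will be handling a measurement fault on a single flag qubit, where one bit of the observed flag pattern is flipped, and this is precisely where the non-generic properties of \lemref{t:slowresetdistance3flagsequences} are needed. If no circuit fault occurred, the true flag readout is $0^a$, so a measurement fault produces an observation of Hamming weight exactly one; because the lemma guarantees the only weight-one path vertices are the two endpoints $v_1$ and $v_n$, the decoder either applies a weight-one correction (contributing at most one spurious data $X$) or applies no correction at all, leaving residual data error of weight at most one. A genuine ancilla $X$ fault compounded with a flag measurement fault is a second-order event outside the distance-three budget, so no further analysis is needed there. The last bookkeeping step is to confirm that the short boundary CNOTs used to ``load'' $v_1$ and ``unload'' $v_n$ do not themselves admit a single-fault propagation into a weight-two uncorrectable data error; this follows by a direct check that they act only on the endpoint flag qubits whose corresponding corrections are already weight one.
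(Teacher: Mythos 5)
Your proposal follows essentially the same route as the paper's proof: two data CNOTs per vertex of the \lemref{t:slowresetdistance3flagsequences} path, one flag CNOT per edge, a correction per on-path pattern as if the fault occurred in that slot, and the weight-$\geq 2$ interior of the path to protect against a single flag-measurement fault. Your write-up is in fact somewhat more explicit than the paper's (boundary CNOTs, off-path patterns, the CNOT-gate fault cases), but the construction and the key lemma are identical, so there is nothing to reconcile.
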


\begin{proof}
Two examples are shown in \figref{f:slowresetd3w6stabilizermeasurement}, for $w = 6$, and \figref{f:syndromemeasurementd3slowresetw10}, for $w = 10$.  
As in these figures, in general we collect the syndrome two qubits at a time into a syndrome qubit that is initialized as $\ket +$.  Between each of these pairs of CNOT gates, a CNOT is applied from the syndrome qubit into one of $m - 1$ flag qubits.  This leads to a sequence of flag patterns, e.g., $100, 110, 111, 011, 001$ for the $w = 10$ example.  Based on the observed flag pattern, a correction is applied as if an $X$ fault had occurred between the corresponding pair of flag CNOT gates.  

Observe that the flag sequence changes one bit at a time; it can be thought of as a path on the hypercube.  It begins and ends with weight-one patterns, but otherwise the patterns all have weight at least two.  This is important for distance-three fault tolerance because a fault could affect the flags, and only the first and last data corrections have weight one.  Also, the flag patterns along the sequence are distinct, so each is associated with only one correction.  The theorem then just follows using the flag sequence construction in \lemref{t:slowresetdistance3flagsequences}.  
\end{proof}

Note that the approach of \thmref{t:syndromemeasurementd3slowreset}, with slow reset, is different from the fast reset case of \thmref{t:fastresetd3syndromemeasurement}, in that a flag qubit is active and able to detect faults in more than one region of the circuit.

\section{Distance-three cat state preparation}
\label{s:d3csp}

Next we turn to the question of distance-three fault-tolerant preparation of cat states.  For preparing a two- or three-qubit cat state, any preparation circuit is automatically fault-tolerant, because every error has weight zero or one.  For example, on three qubits $XXI \sim IIX$, since $XXX$ is a stabilizer.  Fault tolerance becomes interesting for preparing cat states on $w \geq 4$ qubits.  

The ideas of Theorems~\ref{t:fastresetd3syndromemeasurement} and~\ref{t:syndromemeasurementd3slowreset} can also be applied to cat state preparation.  For example, just as in \figref{f:distance3w6} a circuit for measuring $X^{\otimes 6}$ with three ancilla qubits corresponds to a circuit to prepare a six-qubit cat state with two ancillas, similarly adapting the construction of \thmref{t:syndromemeasurementd3slowreset} allows preparing a $2 (2^a - 2 a + 3)$-qubit cat state using $a$ ancilla qubits each measured once.  However, we can do better.  

\begin{theorem} \label{t:catstated3}
For $m \geq 2$, one ancilla qubit, measured $m$ times, is sufficient to prepare a cat state on $w$ qubits fault-tolerantly to distance three, for
\begin{equation*}
w \leq 3 \, \big( 2^m - 2 m + 2 \big)\,.
\end{equation*}
\end{theorem}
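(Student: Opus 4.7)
The plan is to adapt the hypercube-path flag-sequence construction of \lemref{t:slowresetdistance3flagsequences} to cat state preparation, treating a single ancilla qubit that is measured and reset $m$ times as a ``virtual'' $m$-bit flag register: the concatenation of the $m$ measurement outcomes forms an $m$-bit pattern that traces a walk through $\{0,1\}^m$, exactly as in the multi-flag setting of \thmref{t:syndromemeasurementd3slowreset}. For the underlying circuit I would take a ``spine'' preparation that initializes $\ket{+}\ket{0}^{\otimes (w-1)}$ and spreads the superposition outward via CNOTs from a central control~$q_0$; the dangerous events are $X$ faults on $q_0$ between data CNOTs, since these propagate to high-weight $X$ errors on the cat state.

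Given the $n = 2^m - 2m + 3$ vertex path $v_1,\dots,v_n$ from \lemref{t:slowresetdistance3flagsequences}, I would interleave the data CNOTs with ancilla-coupling CNOTs $q_0 \to a$ so that the cumulative measurement outcomes trace the path: crossing the edge $v_i \to v_{i+1}$, which flips bit~$j$, corresponds to coupling $q_0$ to $a$ during the portion of the circuit whose output goes into measurement~$j$. Into each of the $n-1 = 2^m - 2m + 2$ edges I would fit three new data CNOTs from $q_0$, as opposed to only two per edge in \thmref{t:syndromemeasurementd3slowreset}; the extra slot comes from the fact that cat state preparation does not need a final parity readout on $q_0$, so one of the per-edge ancilla bookkeeping CNOTs can be absorbed. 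This yields the bound $w \leq 3(2^m - 2m + 2)$.

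Fault tolerance would then follow by the same case analysis as in \thmref{t:syndromemeasurementd3slowreset}: any single $X$ fault on $q_0$ inside the block for edge $v_i \to v_{i+1}$ propagates to at most three of the new data qubits and, through the subsequent scheduled $q_0 \to a$ couplings, triggers the unique flag pattern $v_{i+1}$, which specifies a weight-$\le 1$ correction (the $Z_iZ_j$ stabilizers of the cat state let one reduce weight-two $Z$ discrepancies to weight zero); an ancilla fault triggers only one of the endpoint weight-one patterns $v_1 = 10^{m-1}$ or $v_n = 0^{m-1}1$, each corresponding to a weight-$\le 1$ data error; all other single-qubit faults are benign.

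The main obstacle, and the place this proof departs from \thmref{t:syndromemeasurementd3slowreset}, is verifying that resetting the ancilla between measurements does not collapse the reachable patterns to fewer than $2^m - 2m + 2$. Because an $X$ error on $q_0$ persists on the control wire across an ancilla reset, it will flip \emph{all} subsequent $q_0 \to a$ couplings and hence influence every later measurement; the ancilla-coupling schedule must therefore be engineered so that these propagated flips, reduced modulo~$2$ inside each of the $m$ measurement windows, realize exactly the intended hypercube bit flip for each edge and no other. Once the coupling schedule is chosen to match the path, the endpoint patterns are covered by the weight-one corrections at the two ends of the cat state, the intermediate weight-$\geq 2$ patterns each specify a multi-qubit correction supported on the three data qubits of the triggering edge, and the theorem closes.
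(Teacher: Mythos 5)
You have taken a genuinely different route from the paper --- interleaving flag couplings on the control wire $q_0$ during preparation, as in \thmref{t:syndromemeasurementd3slowreset} --- and that route has a gap at exactly the step that produces the factor of three. In an interleaved construction, an interval containing $k$ data CNOTs has $k+1$ fault locations on the control wire that all yield the \emph{same} flag pattern (only crossing a flag coupling changes the suffix XOR that is eventually read out), and these locations produce $k+1$ nested prefix errors of consecutive weights. With $k=3$ the best common correction leaves a residual of weight $2$ at an extreme location, violating distance three; the cat state's stabilizers ($X^{\otimes w}$ and the $Z_iZ_j$'s) do not reduce a weight-two $X$ residual in the bulk. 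This is precisely why the paper notes, at the start of \secref{s:d3csp}, that directly adapting \thmref{t:syndromemeasurementd3slowreset} yields only $w \le 2(2^a - 2a + 3)$, i.e.\ two data qubits per flag pattern. Your claimed source of the extra slot --- that no final parity readout of $q_0$ is needed --- saves at most a constant at one end of the circuit, not one CNOT per edge, so your construction does not reach $3(2^m - 2m + 2)$ even though the arithmetic happens to match the theorem.

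The paper obtains the factor of three by a structurally different mechanism: the cat state is first prepared non-fault-tolerantly with CNOTs from qubit $1$, so a single fault produces a prefix error $I, X_1, X_{[2]}, X_{[3]}, \ldots$, and the flag sequence of \lemref{t:slowresetdistance3flagsequences} is then used to choose \emph{parity checks between data qubits}, measured after preparation. An error $X_{[i]}$ triggers a check exactly when it cuts the checked pair, so the flag pattern is a function of the data error alone rather than of a fault's position relative to interleaved couplings; placing check qubits every third position makes each path vertex absorb exactly three consecutive errors $\{X_{[3j]}, X_{[3j+1]}, X_{[3j+2]}\}$ with common correction $X_{[3j+1]}$ and residual weight at most one, with no overlap between adjacent groups. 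This also dissolves the ``main obstacle'' you identify but do not resolve: the parity-check CNOTs act only on data qubits and commute, so running them sequentially with one resettable ancilla is immediate, and no reset-timing bookkeeping is needed. A further slip in your fault analysis: a faulty ancilla measurement can produce \emph{any} weight-one pattern, not only the endpoints $v_1$ and $v_n$; what the argument actually needs is that no weight-one pattern is ever assigned a multi-qubit correction, which is what the lemma's endpoint condition guarantees.
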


\noindent
Let $[m] = \{1, 2, \ldots, m\}$ and $X_S = \prod_{j \in S} X_j$.

\begin{proof}[Proof of \thmref{t:catstated3}]
\fullfigref{f:catstated3} illustrates our construction for the cases $m = 3$ and $m = 4$.  In general, we prepare a $w$-qubit cat state using CNOT gates from the first qubit, so that the possible $X$ errors from a single fault are $\identity, X_1, X_{[2]}, X_{[3]}, \ldots$.  
We then compute parities of subsets of the qubits into the ancillas, following the flag sequence from \lemref{t:slowresetdistance3flagsequences} and \figref{f:slowresetdistance3flagsequences}.  Although for clarity \figref{f:catstated3} shows the $m$ parity checks being made in parallel, they can also be made sequentially with just one ancilla qubit.  

\begin{figure*}
\centering
\subfigure[\label{f:catstatew12}]{\includegraphics[scale=.75]{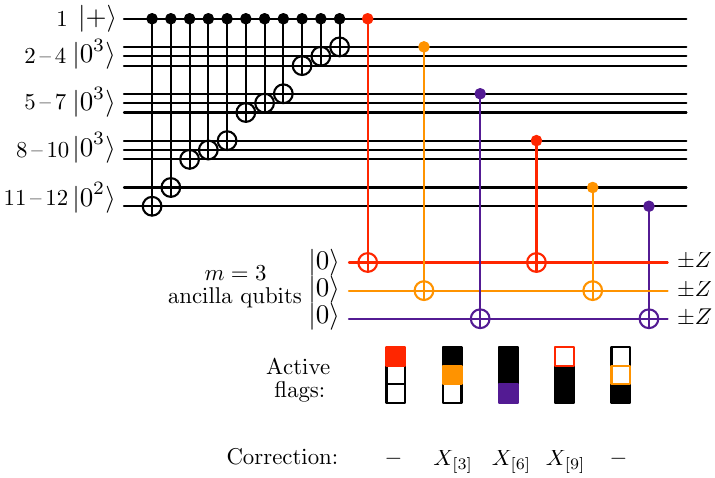}}
\subfigure[]{\includegraphics[scale=.75]{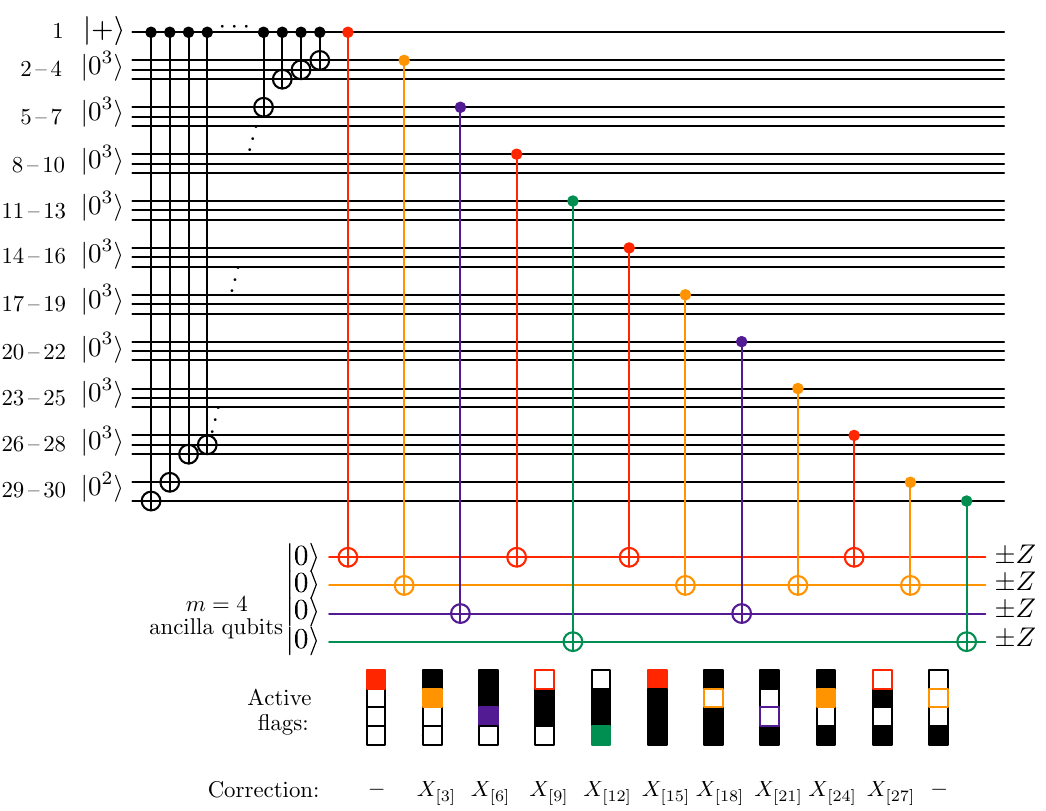}}
\caption{Distance-three fault-tolerant cat state preparation circuits.  Note that, with fast reset, only one ancilla qubit is required.} \label{f:catstated3}
\end{figure*}

With the given correction rules, errors due to single faults are corrected up to possibly a weight-one remainder.  (For example, in \figref{f:catstatew12}, errors $X_{[5]}$, $X_{[6]}$ and $X_{[7]}$ all result in the parity checks $111$, for which the correction $X_{[6]}$ is applied.)  The circuit also tolerates faults within the parity-check sub-circuit, because a single fault here can flip at most one parity, and no correction is applied for the weight-one patterns.  
\end{proof}

By this method, the cat state is prepared in depth $w - 1$.  The depth of the parity check circuit increases exponentially as $2^{m - 2}$ for $m \geq 3$ if we consider slow reset ($a = m$).  This is evident from the flag sequences in \figref{f:slowresetdistance3flagsequences} as the maximum number of times any flag bit is switched.  The total depth of the circuit is then $(w - 1) + 2^{m - 2}$.  

Note that the construction from \thmref{t:catstated3} does not help for syndrome measurement, because the parity checks would in general become entangled with the data. 

We can do slightly better if we allow an \emph{adaptive} circuit, in which the parity checks are chosen based on the outcome of a flag qubit measurement.  For example, \fullfigref{f:adaptiveslowresetd3w15} gives a circuit to prepare a $15$-qubit cat state using $m = 3$ measurements.  Here, the result of measuring the red ancilla determines how the other two ancillas are used.  

\begin{figure*}
\centering
\includegraphics[width=.85\textwidth]{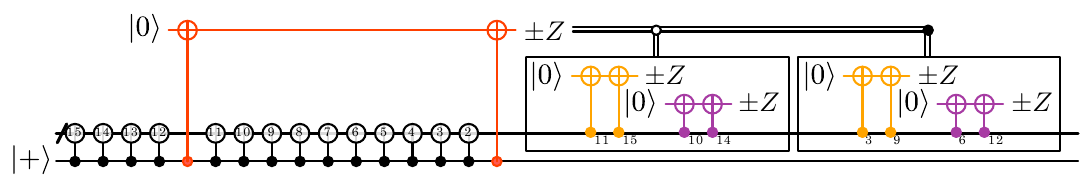}
\caption{Circuit to prepare a $15$-qubit cat state by adaptive error correction, fault-tolerant to distance three.  Labels on the thick black wire indicate which data qubit in the block is being addressed as the control or target of the CNOT.  If a fault occurs while preparing the cat state on the $\ket +$ qubit, it is partially localized by the red flag ancilla.  The measurement result of this flag then determines a set of parity checks to completely localize a possible fault.  After all the ancilla qubits have been measured, corrections are applied based on \tabref{f:adaptiveslowresetd3w15corrections}.  
}
\label{f:adaptiveslowresetd3w15}
\end{figure*}


\begin{theorem} \label{t:adaptiveslowresetd3}
Using an adaptive circuit, for $m \geq 2$, one ancilla qubit, measured $m$ times, can be used to prepare a cat state on $w$ qubits fault-tolerantly to distance three, for
\begin{equation*}
w \leq 3 \, \big( 2^m - 2 m + 3 \big)\, .
\end{equation*}
\end{theorem}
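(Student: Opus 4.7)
The plan is to extend the non-adaptive construction of Theorem \ref{t:catstated3} by inserting one adaptive branching step: perform the first of the $m$ ancilla measurements (call it the ``red'' flag) partway through cat-state preparation, and let its outcome select the parity-check schedule for the remaining $m - 1$ measurements, as illustrated for $m = 3$, $w = 15$ in Figure \ref{f:adaptiveslowresetd3w15}. Concretely, I would interleave a single CNOT from the $\ket +$ control wire into the red ancilla near the middle of the sequence of $w - 1$ data CNOTs; an $X$ fault on the control wire before the red CNOT toggles the red flag, while a fault after it does not, so the red outcome partitions the possible fault locations on the preparation wire into two halves. Conditional on this outcome I would then apply one of two distinct sequences of parity-check CNOTs on the data, each built from an appropriately shifted copy of the flag sequence of Lemma \ref{t:slowresetdistance3flagsequences}.

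For the counting, in the non-adaptive case the flag sequence has length $n = 2^m - 2m + 3$, giving $n - 1$ transitions, each covering three consecutive data qubits, so $w \leq 3(n - 1)$. Adaptivity effectively lets the combined $m$-bit signature take one additional usable value, because the red outcome, acting as a global branch selector, allows two otherwise-forbidden patterns to be reused in different branches without being confused for each other. One extra transition contributes three more data qubits, and I expect the bound $w \leq 3(2^m - 2m + 3)$ to follow directly.

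The main obstacle is verifying distance-three fault tolerance of the branching circuit, and specifically consistency across the two branches. For a single fault on the control wire the combined signature localizes the fault to a window of three positions and the lookup-table correction reduces the error to weight at most one modulo the stabilizer $X^{\otimes w}$; for a fault on a parity-check CNOT or on one of the $m - 1$ subsequent ancillas, the observed signature differs from the fault-free signature by a single bit, and as in Theorem \ref{t:catstated3} weight-one deviations must be assigned corrections of weight at most one. The delicate case is a measurement fault on the red ancilla itself, which routes the classical controller into the wrong adaptive branch; here I have to check that, when the cat preparation is otherwise fault-free, the parity checks executed in the wrong branch still yield that branch's fault-free pattern, so no non-trivial correction is applied. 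Arranging the two branches to satisfy this cross-branch consistency is what determines the placement of the red CNOT and the assignment of the $m - 1$ parity-check subsets; once these choices are coordinated, the correction table (as in Table \ref{f:adaptiveslowresetd3w15corrections}) realizes the claimed bound.
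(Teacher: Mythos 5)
Your proposal has the right skeleton---a mid-circuit flag measurement whose outcome selects the remaining parity checks, plus attention to the delicate case where the red measurement itself is the single fault---but it misses the idea that actually produces the claimed bound. You propose that \emph{both} branches run ``an appropriately shifted copy of the flag sequence of Lemma~\ref{t:slowresetdistance3flagsequences}.'' That cannot work quantitatively: two restricted sequences on $m-1$ bits cover at most $2\cdot 3\,(2^{m-1}-2(m-1)+2)=3\,(2^m-4m+8)$ qubits, which for $m\ge 4$ is already \emph{worse} than the non-adaptive bound $3\,(2^m-2m+2)$ of \thmref{t:catstated3}. The point of adaptivity in the paper is asymmetric: once the red flag fires, the single tolerated fault is already accounted for, so the weight-one and all-zero patterns of the remaining $m-1$ bits become safe to associate with multi-qubit corrections, and one can run an \emph{unrestricted} Gray-code binary search (\lemref{t:graycode}) over all $2^{m-1}$ patterns, covering $k=3\cdot 2^{m-1}-2$ preparation CNOTs; only the constraint that the all-zeros outcome map to the $\{\identity, X_1, X_{[2]}\}$ triple is needed to survive a faulty red measurement. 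The red-equals-zero branch then uses the restricted sequence of \lemref{t:slowresetdistance3flagsequences} over the remaining $3J+1$ CNOTs with $J=2^{m-1}-2(m-1)+1$, and $3J+k+2=3\,(2^m-2m+3)$. Your counting heuristic (``one additional usable value'' worth three qubits) lands on the right formula only by coincidence; it does not describe this mechanism and would not survive scrutiny.

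Two further concrete problems. First, the monitored and unmonitored regions are not ``two halves'': their sizes $k=3\cdot 2^{m-1}-2$ and $3J+1$ are very different and are dictated by what each branch's $m-1$ measurements can resolve, so placing the red coupling ``near the middle'' fails. Second, a \emph{single} CNOT from the $\ket{+}$ control wire into a fresh ancilla entangles the ancilla with the cat state, and measuring it collapses the state; you need two CNOTs bracketing the monitored region (as in \figref{f:adaptiveslowresetd3w15}) so that the ancilla returns to $\ket{0}$ absent a fault and records only an $X$ fault occurring between them.
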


\begin{figure*}
\centering
\subfigure[$w = 15, a = 3$]{\includegraphics[scale=.8]{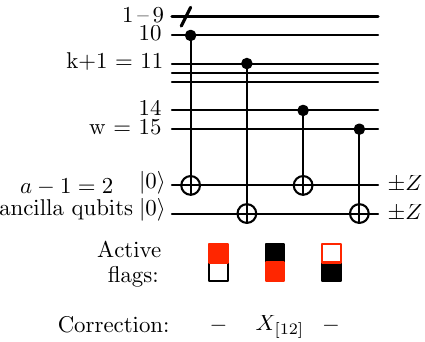}}\hspace{1cm}
\subfigure[$w = 33, a = 4$]{\includegraphics[scale=.8]{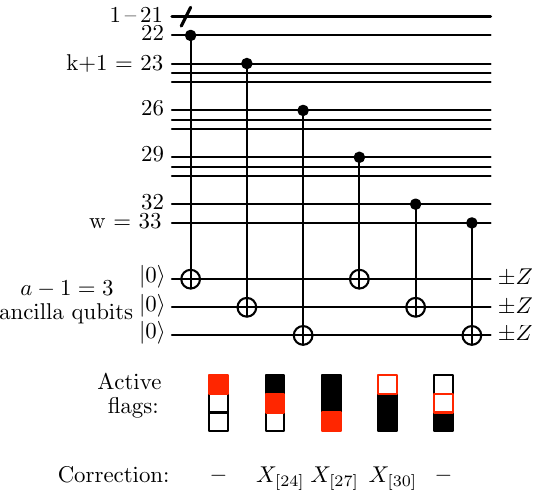}}
\subfigure[$w = 75, a = 5$]{\includegraphics[scale=.8]{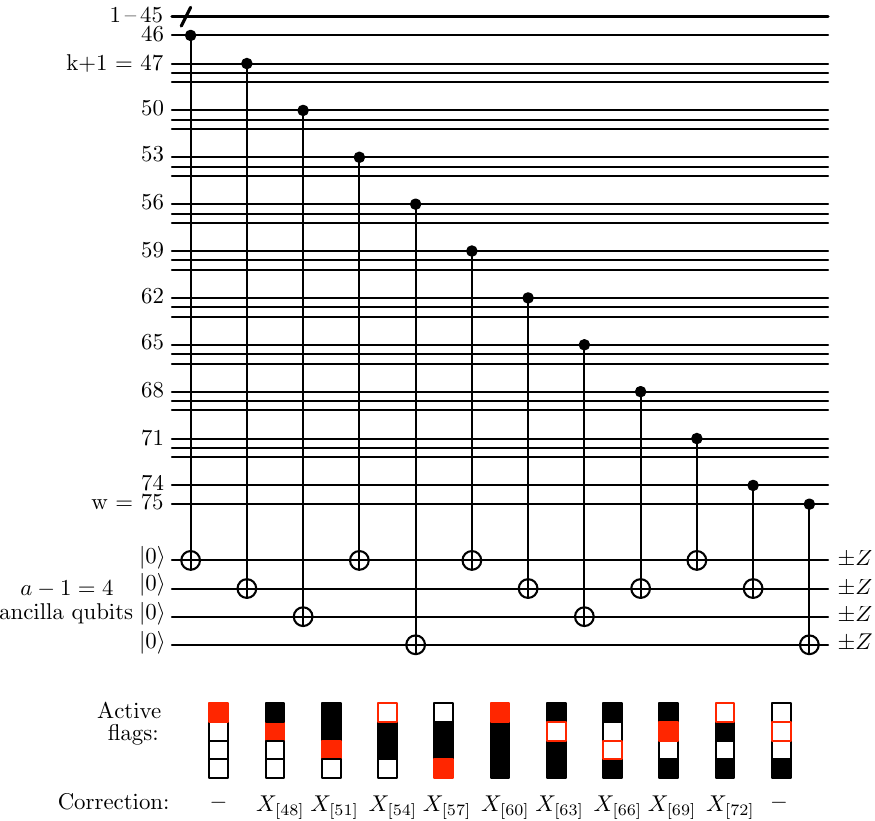}}
\caption{If the red ancilla flag in \figref{f:adaptiveslowresetd3w15} is not triggered, these circuits are used to find and correct a possible error.  The flag sequences (from \figref{f:slowresetdistance3flagsequences}) and corresponding corrections are listed at the bottom.   Note that these sequences are nonadaptive, and can be used either with $a$ ancilla qubits in a slow reset model, or with just one ancilla qubit in a fast reset model, since all the CNOT gates commute.  
} \label{f:adaptived3noflagcorrections}
\end{figure*}

\begin{proof}
Our construction will follow the same basic structure as the circuit in \figref{f:adaptiveslowresetd3w15}.  
Prepare the $w$ data qubits as $\ket{{+} 0^{w-1}}$, then apply $\CNOT_{1,w}, \CNOT_{1,w-1},$ $ \ldots, \CNOT_{1,2}$ to get a cat state.  Let $k = 3 ( 2^{ m - 1 }) - 2 $.  Just before $\CNOT_{1,k+1}$ and just after $\CNOT_{1,2}$, apply CNOTs into the first ancilla qubit, the red qubit in \figref{f:adaptiveslowresetd3w15}, and measure it.  

The remainder of the circuit depends on the measurement result.  If it is~$1$, then a fault has been detected.  The error on the cat state can be one of 
\begin{equation*}
\identity, X_1, X_{[2]}, \;\;\; X_{[3]}, X_{[4]}, X_{[5]}, \;\;\; \ldots, \;\;\; X_{[k-1]}, X_{[k]}, X_{[k+1]} \, .
\end{equation*}
The correction procedure needs to determine in which of the above $1 + \tfrac{k-1}{3}$ groups-of-three the error lies; then for any error in $\{X_{[3 j]}, X_{[3 j + 1]}, X_{[3 j + 2]}\}$ the correction $X_{[3 j + 1]}$ works.  Perhaps the easiest way to locate the error is by binary search using the Gray code in \lemref{t:graycode}, e.g., by computing parities between qubits $3 j$ for~$j \in \{ 1, 2, \ldots, 1 + \tfrac{k-1}{3} \}$.  Since the measurement of the red ancilla could have been incorrect, it is important that the all-$0$s outcome of the binary search correspond to the $\identity, X_1, X_{[2]}$ error triple, as in~\tabref{f:adaptiveslowresetd3w15corrections}.  Using $m - 1$ measurements, we can search $2^{m-1}$ possibilities, which indeed is $1 + \tfrac{k-1}{3}$.  (The search circuit can also be made nonadaptive, as in \figref{f:adaptiveslowresetd3w15}.)  

Next consider the case that the first measurement result is~$0$, so no fault has been detected.  The error on the cat state can be one of $X_{[k+1]}, X_{[k+2]}, \ldots, X_{[w]} \sim \identity$.  We again use the remaining $m - 1$ ancilla qubits to measure parities of subsets of cat state qubits.  Since there is no guarantee of a fault having occurred yet, we use flag sequences from \lemref{t:slowresetdistance3flagsequences}, where the length of the weight-at-least-two flag sequence is $J = 2^{m - 1} - 2 ( m - 1 ) + 1$.  The parity checks are now done between qubits $\{ k, k + 1 + 3 j, k + 2 + 3 J \} $ for $ j \in \{ 0, 1, \ldots , J \}$, as shown in \figref{f:adaptived3noflagcorrections} and \tabref{f:adaptiveslowresetd3w15corrections}.  We do not allow weight-one flag patterns to be able to correct any errors since they may also be triggered by a measurement fault on any one of the data qubits involved in the parity check.

\begin{table*}
\vspace{0.2cm} 
\caption{Possible data errors and associated corrections for the different observed flag patterns in \figref{f:adaptiveslowresetd3w15}.  $[m] = \{1, 2, \ldots, m\}$.} \label{f:adaptiveslowresetd3w15corrections}
\centering
\begin{tabular}{c c c c c c}
\hline
\hline
 & & & & & \\[-0.33cm]
{\color{red} Red flag} & \multicolumn{2}{c}{Parity checks} & & Possible errors & Correction \\
\hline
 & & & & \\[-0.3cm]
 $1$ & ${\color{Tangerine} 3 \oplus 9}$ & ${\color{Purple} 6 \oplus 12}$ & & &  \\
 \cline{2-3}
 & & & & & \\[-0.3cm]
 & $0$ & $0$ & &  $\identity , X_1, X_{[2]}$ & $X_1$ \\ 
 & $1$ & $0$ & &  $X_{[3]}, X_{[4]}, X_{[5]}$ & $X_{[4]}$ \\ 
 & $1$ & $1$ & &  $X_{[6]}, X_{[7]}, X_{[8]}$ & $X_{[7]}$ \\ 
 & $0$ & $1$ & &  $X_{[9]}, X_{[10]}, X_{[11]}$ & $X_{[10]}$ \\[0.25cm]
 $0$ & ${\color{Tangerine} 11 \oplus 15}$ & ${\color{Purple} 10 \oplus 14}$ & & &  \\
 \cline{2-3}
 & & & & & \\[-0.3cm]
 & $0$ & $0$ & & $\identity$ & None \\ 
 & $1$ & $0$ & &  $X_{11}, X_{15}, X_{[14]}$ & None \\
 & $1$ & $1$ & & $X_{[11]}, X_{[12]}, X_{[13]}$ & $X_{[12]}$ \\
 & $0$ & $1$ & & $ X_{10}, X_{14}$ & None \\
  & & & & & \\
\hline 
\hline
\end{tabular}
\end{table*}

\begin{table*}
\centering
\vspace{0.2cm} 
\caption{Space and time costs for measuring a weight-$w$ stabilizer using different distance-three fault-tolerant stabilizer measurement circuits. In the following, all the logarithms are base $2$. The flag method requires the fewest ancillas and has low depth, allowing for the smallest cost when computing $\# \text{ancillas} \times \text{depth}$. }
\begin{tabular}{c c c c}
\hline
\hline
Protocol & Ancillas & Depth & Ancillas$\times$Depth \\
\hline
Shor & $w+1$ & $w/2 + 3$ & $O(0.5 {\color{red}w^2})$ \\
Shor-Par & $5w/4$ & $3\log w - 1 $ & $O(3.75 {\color{red}w \log w})$ \\
DA & $w$ & $2w - 1$ & $O(2 {\color{red}w^2})$ \\
Compressed DA & $w/2$ & $3w/2 - 2$ & $O(0.75 {\color{red}w^2})$ \\
Flag & $\log w +1$  & $3w/2 + O(1)$ & $O(1.5 {\color{red}w \log w})$ \\[0.25cm]
Not fault-tolerant & $1$ & $w$ & $O({\color{red}w})$ \\
\hline
\hline
\end{tabular}
\label{f:resourcereqs}
\end{table*}

Consolidating, we are allowed up to $3 J + 1$ CNOTs before the red ancilla is initialized, and up to $k$ CNOTs in the monitored region of the red ancilla. In total we can create a cat state on up to 
\begin{equation*}
w \leq 3 J + k + 2 = 3 \, \big( 2^m - 2 m + 3 \big)
\end{equation*}
qubits, with $m$ total measurements.
\end{proof}

We also tested protocols where multiple flags are used for the initial partial localization of a fault (in place of the red flag qubit).  We found no improvement to our bounds on ancilla overhead.  It appears that ancillas are better used in the parity checks than for partial fault~localization. 

\vspace{-0.2cm} 
\section{Simulation and space-time cost}

We count the circuit depth and number of ancillas used in our distance-three fault-tolerant stabilizer measurement circuits. Parallelization can substantially reduce circuit depth. \tabref{f:resourcereqs} compares our flag method for measuring a weight-$w$ stabilizer to the earlier methods in \figref{f:stabmeascomparison}. Also considered is a parallelized Shor method, in which the initial cat state is prepared in logarithmic depth, with $w/4$ extra ancilla qubits for postselection checks. The Shor methods must pass the postselection checks, and so they are non-deterministic protocols. \tabref{f:resourcereqs} shows the best case, where all the checks pass. Note that the flag and parallelized Shor methods both have $\text{space} \times \text{depth}$ cost  scaling as $O(w \log w)$, with the leading coefficient in favor of the flag method.

Using a standard depolarizing noise model, we simulate noisy versions of the different circuits to determine statistics of the weight-one and weight-two errors. Specifically:
\vspace{0.25cm} 
\begin{itemize}[leftmargin=*]
\item With probability $p$, the preparation of $\ket 0$ is replaced by $\ket 1$ and vice versa---similarly for $\ket +$ and $\ket -$.
\item With probability $p$, an $X$ or $Z$ measurement has its outcome flipped.
\item With probability $p$, a one-qubit gate is followed by a Pauli error drawn uniformly at random from $\{ X, Y, Z\}$.
\item With probability $p$, the two-qubit CNOT gate is followed by a two-qubit Pauli error drawn uniformly at random from $\{ I, X, Y, Z\}^{\otimes 2} \setminus \{I \otimes I \}$.
\end{itemize}
\vspace{0.25cm} 
There are no errors on idle resting qubits. 
\vspace{0.2cm} 

\begin{figure*}
\centering
\includegraphics[width=.97\textwidth]{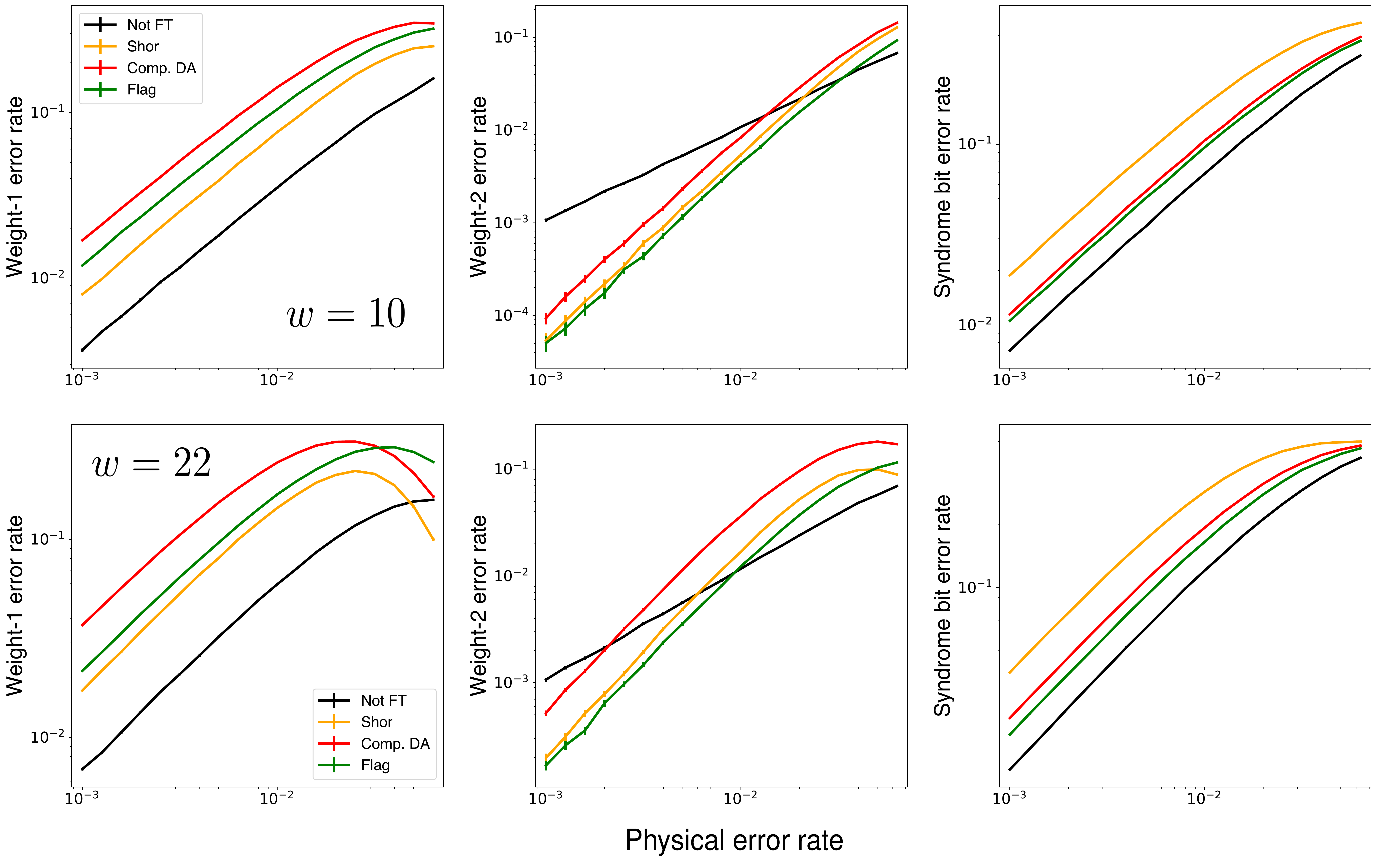}
\caption{Simulation of the noisy measurement of an $X ^{\otimes 10}$ and $X ^{\otimes 22}$ stabilizer at physical error rate $p \in \{ 10^{-3}, 10^{-2} \}$ using different distance-three fault-tolerant circuits: Shor-style, compressed Divincenzo-Aliferis, and the flag method of \secref{subsec:stabmeasslow}. In the first and second column of graphs, we show the rate of weight-one and weight-two data errors due to these circuits, with $99 \%$ error bars. In the third column, we show the rate at which the measured syndrome bit is wrong. 
} \label{f:sims}
\end{figure*}

\fullfigref{f:sims} shows the rates of weight-one errors and weight-two errors for different input physical error rates $p$. The rate of weight-one errors is lowest in the non-fault-tolerant circuit, since it contains the fewest locations for faults. The Shor method has a lower weight-one error rate than the flag method, but among the deterministic fault-tolerant methods, the flag method performs the best. For larger stabilizers ($w=22$), the curves for the Shor method and the flag method are closer, implying that the difference in the rate of weight-one errors between the Shor method and the flag method is reduced. 

As expected, the rate of weight-two errors of the three fault-tolerant protocols scales quadratically with $p$, allowing for a lower probability of weight-two errors below a pseudothreshold (physical error rate below which a fault-tolerant method achieves lower weight-two error rate than the non-fault-tolerant method). Notice that the flag method has the highest pseudothreshold. Moreover, as the stabilizer weight is increased, the pseudothreshold of the flag method decreases slower than those of the other fault-tolerant methods. Asymptotically, the flag method admits the highest pseudothreshold for weight-two errors, but incurs more weight-one errors than the probabilistic Shor method. Additionally, we compute the rate of errors on the syndrome bit, as this determines how much fault tolerance will be needed to correct faulty syndrome information~\cite{Delf22}. The rate of faulty syndrome bits is lowest when using the flag method for fault tolerance.

\section{Conclusion} 

In this paper, we optimize the overhead of distance-three fault-tolerant stabilizer measurement and cat state preparation.  If the circuit on $w$ qubits must tolerate one fault, we show that $\sim \! \log w$ extra qubits are sufficient.  
We detail the construction of a maximal-length path on the hypercube and show that, compared to previous flag schemes, it allows using the extra flag qubits more efficiently to catch and distinguish faults.

We describe two circuits for stabilizer measurement based on the speed of ancilla qubit reset.  With slow reset, a weight-$w$ stabilizer can be measured fault-tolerantly to distance three using only $\lceil \log_2 w \rceil$ flag qubits for fault tolerance.  With fast reset, only three flag qubits are required, but the number of times they are measured and reset grows as $w/4$. 

In our circuits for fault-tolerant cat state preparation we check for errors after the cat state is non-fault-tolerantly prepared. We show, using a deterministic and an adaptive circuit, that the overhead for fault tolerance can be as low as logarithmic in the size of the cat state. In fact, only one flag qubit suffices, as long as it can reset quickly.  

We now turn to further improvements.  The circuits detailed in this paper are only fault-tolerant to distance three.  
However, flags can be used to effect fault tolerance to arbitrary distance~\cite{Chamberland2018flagfaulttolerant, chao2019flag}, and it is open to develop higher-distance fault-tolerant stabilizer measurement circuits with low overhead.  

From the perspective of stabilizer algebra, a cat state is a CSS ancilla state.  A future avenue of research is to extend these flag techniques to fault-tolerantly and deterministically prepare more complex CSS ancilla states. 

In order to execute the circuits in this paper, one qubit needs to be connected to all the other qubits used.  This is concerning for architectures with limited connectivity, such as superconducting qubits.  But by mixing flag and transversal gate concepts for fault tolerance, it is possible to construct stabilizer measurement circuits that can measure arbitrarily large stabilizers using only local interactions, fault-tolerantly.  

\section{Acknowledgements}

We would like to thank Rui Chao, Sourav Kundu and Zhang Jiang for insightful conversations.  Research supported by Google and by MURI Grant FA9550-18-1-0161.  This material is based on work supported by the U.S. Department of Energy, Office of Science, National Quantum Information Science Research Centers, Quantum Systems~Accelerator.

\medskip

\appendix

\section{Postselected cat state preparation tolerating two faults}
\label{s:d3ED}

Shor's method for fault-tolerant stabilizer measurement relies on the fault-tolerant preparation of a cat state by postselection.  In \figref{f:Shorcard}, the cat state is prepared fault-tolerantly to distance-two; it detects one fault.  For postselected distance-three fault tolerance, any one or two faults in the circuit must result in an error of weight at most one or two respectively, else the state must be rejected. In \figref{f:d3 error detection} we show how to prepare a weight-$12$ cat state fault-tolerantly to distance three---detecting up to two faults.


\begin{theorem} \label{t:errordetcatd3}
One ancilla qubit measured $m \geq 2$ times, can be used to prepare a cat state on $w$ qubits fault-tolerantly to distance three, detecting up to two faults,~for
\begin{equation*}
 w \leq 3 \cdot 2^{m - 1} \, .
 \enspace 
\end{equation*}
\end{theorem}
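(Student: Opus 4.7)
The plan is to adapt the construction of \thmref{t:catstated3} to accommodate postselected distance-three fault tolerance, where two faults must leave a data error of weight at most two or else force rejection of the state. First, I would prepare the $w$-qubit cat state non-fault-tolerantly by applying CNOT gates from qubit $1$ to qubits $2,3,\ldots,w$, and intersperse flag CNOTs into an ancilla qubit that is measured and reset $m$ times (or equivalently into $m$ parallel ancillas, since all CNOTs from the shared control commute). As in \thmref{t:catstated3}, group the possible single-fault positions on the control wire into consecutive triples: within any such triple, the three errors $X_{[j]},X_{[j+1]},X_{[j+2]}$ differ by weight-one stabilizer-equivalent operators, so they may safely share a single flag pattern.

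Next, I would define the acceptance rule: a flag pattern is accepted iff it lies in a small reserved set $A$ corresponding to triples whose errors already have weight at most $1$ (the triples at the start and end of the sequence). All other observed patterns discard the state. The advantage over \thmref{t:catstated3} is that postselection, in place of correction, removes the prohibition on weight-one flag patterns; this would naively let all $2^m$ patterns serve as triple labels, giving $w \le 3\cdot 2^m$. However, the two-fault requirement forces a stronger constraint: a single measurement fault must not shift a rejected label onto an accepted one, since that could hide a weight-$\geq 3$ data error. I would meet this by restricting triple labels to a parity-constrained subset of $\{0,1\}^m$ of size $2^{m-1}$, such as the even-weight strings, so that any single flag-measurement flip produces an odd-weight pattern that is automatically rejected. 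Traversing such a subset along a Hamming-path ordering analogous to \lemref{t:slowresetdistance3flagsequences} determines where to place the flag CNOTs between consecutive triples, yielding the claimed bound $w \le 3\cdot 2^{m-1}$.

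Finally, I would verify distance-three fault tolerance by exhausting the two-fault cases: (i) two prep faults at positions $i<j$ produce the data error $X_{\{i+1,\ldots,j\}}$ and observed pattern $P_{g(i)}\oplus P_{g(j)}$---within a triple the error weight is at most $2$ and the XOR is $0\in A$, while across distant triples the XOR must land outside $A$; (ii) one prep fault plus one flag-measurement fault shifts an even-weight label to odd weight, forcing rejection regardless of the prep error weight; (iii) two flag-measurement faults leave the data intact and may return to an even-weight label, which is safe. The main obstacle is case~(i) across triples: one must verify that for every pair of distinct triples whose combined error exceeds weight $2$, the label XOR truly lies outside $A$. I expect this to follow from choosing the triple-label sequence so that only adjacent-triple XORs, which correspond to low-weight data errors at the ends of the circuit, can coincide with the accepted set.
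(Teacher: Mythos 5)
Your proposal takes a genuinely different route from the paper, and the route as described does not go through. The paper keeps only one flag (the ``red'' one) on the control wire, used purely to \emph{detect} any single bad preparation fault; the remaining $m-1$ measurements are Gray-code parity checks performed directly on the data qubits $2,5,8,\ldots$ (products of $Z_iZ_j$ stabilizers of the cat state), and the state is accepted only if every flag reads $0$. A single bad preparation fault is then detected redundantly, by the red flag \emph{and} by the data parity checks, so a second fault can defeat at most one of the two detectors; this is why the full Gray code, weight-one patterns included, is usable there. Your scheme has no analogue of this redundancy, and the even-weight device meant to replace it cannot be realized: consecutive labels along the control wire differ by exactly one flag CNOT and therefore alternate parity, so the even-weight strings form an independent set in the hypercube and admit no Hamming path. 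If you patch this by inserting two flag CNOTs between consecutive triples, every transition passes through an intermediate odd-weight label; these intermediate labels must all be distinct (two preparation faults in two equal-labeled intermediate segments XOR to $0\in A$ and are accepted with a high-weight error), so reaching $2^{m-1}$ even labels consumes all $2^{m-1}$ odd-weight strings, including the $m$ weight-one ones. A preparation fault in a segment labeled by a weight-one string, combined with a measurement fault on that flag, yields the accepted pattern $0$ while leaving a high-weight data error---precisely the failure mode you set out to exclude. For $m\ge 3$ you must therefore discard at least $m-2$ labels and fall short of $3\cdot 2^{m-1}$.

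The step you defer in case (i) is also where the real work lies, and it constrains $A$ more than you acknowledge: the even-weight strings form a group under XOR, so if $A$ contains any nonzero pattern $c$ there are exactly $2^{m-2}$ accepted pairs $\{P_i,P_j\}$ with $P_i\oplus P_j=c$, almost all of them distant triples whose combined two-fault error has weight greater than two. You are thus forced to take $A=\{0\}$, with the initial and final groups of fault positions both carrying the label $0$ (consistent, since their errors are $\sim I$, $X_1$, $X_w$ up to the stabilizer $X^{\otimes w}$); this needs to be stated and verified rather than expected. Your bookkeeping within a triple is fine---three consecutive fault positions sharing one label admit two-fault errors of weight at most two, matching the paper's ``every third qubit'' spacing---but as written the labeling scheme neither exists in the claimed form nor attains the claimed bound, whereas the paper's red-flag-plus-data-parity-check construction does.
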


\begin{proof}
We explain the proof using the circuit in \figref{f:d3 error detection}.  The circuit passes with acceptable weight-one or weight-two errors when all the flag qubits are measured as $0$.  If one $X$ fault occurs on the $\ket +$ qubit during the preparation of the cat state, it may spread to a data error of weight~$> 1$. However the red flag qubit is triggered and the fault is detected.  If two $X$ faults occur on the $\ket +$ qubit, the red flag qubit may not catch it, yet a data error of weight~$> 2$ can exist on the cat state.  Since this scenario only arises from two faults, it suffices to check the parities between every third qubit of the cat state, as an error on two consecutive qubits is acceptable.  Higher-weight errors, such as the weight-seven error $X_2X_3 \mathellipsis X_8$ in \figref{f:d3 error detection} may not be detected by parity checks that have an even number of erroneous qubits. However these errors are always caught by other parity checks.

To check for errors of weight greater than two, we perform parity checks similar to that in \thmref{t:catstated3}. Instead of the flag sequence from \lemref{t:slowresetdistance3flagsequences}, the Gray code from \lemref{t:graycode} is used. Now the parities are computed between qubits $3j - 1$ for $j \in \{1, 2, \ldots, 2^{m - 1}\}$.  The first and the last qubits are not checked for errors and so with $m$ flags, the maximum cat state weight achieved is $3 \cdot 2^{m - 1}$.  
\end{proof}

\begin{figure}
\centering
\includegraphics[scale=.75]{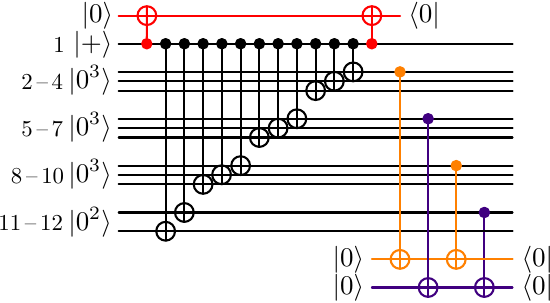}
\caption{Two-error-detecting fault-tolerant circuit for the preparation of a weight-$12$ cat state. The state is only accepted when all flag qubits are measured as $0$.  Note that with fast reset, only one ancilla qubit is required.}
\label{f:d3 error detection}
\end{figure}

\section{Parallelized distance-three cat state preparation}
\label{s:d3Par}

So far, we have focused on fault-tolerant preparation circuits with depth linear in the cat state weight.  In this section, we detail how to prepare cat states fault-tolerantly in logarithmic depth.

\begin{figure}
\subfigure[]{\includegraphics[scale=.15]{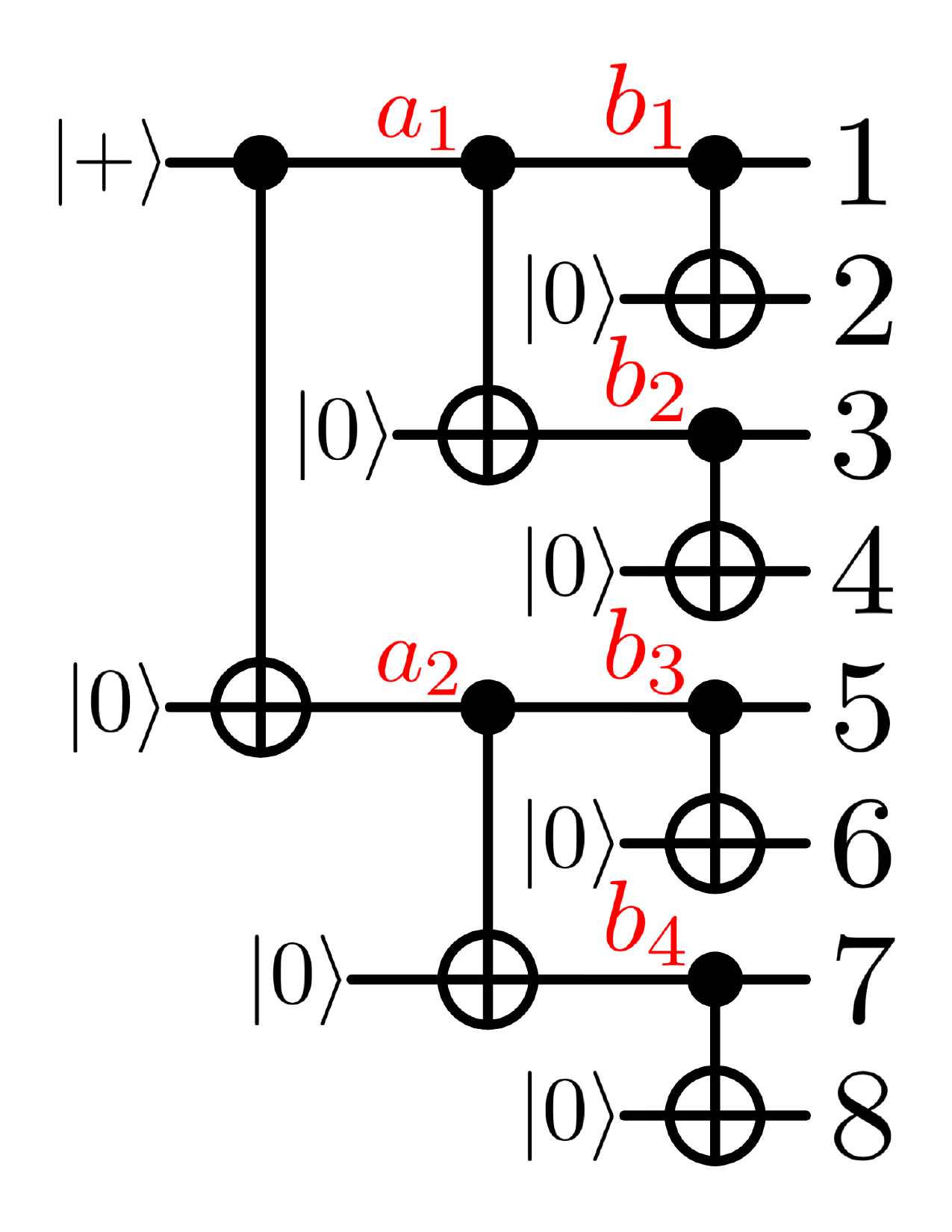}\label{f:parallelcsp8}} 
\subfigure[]{\includegraphics[scale=.15]{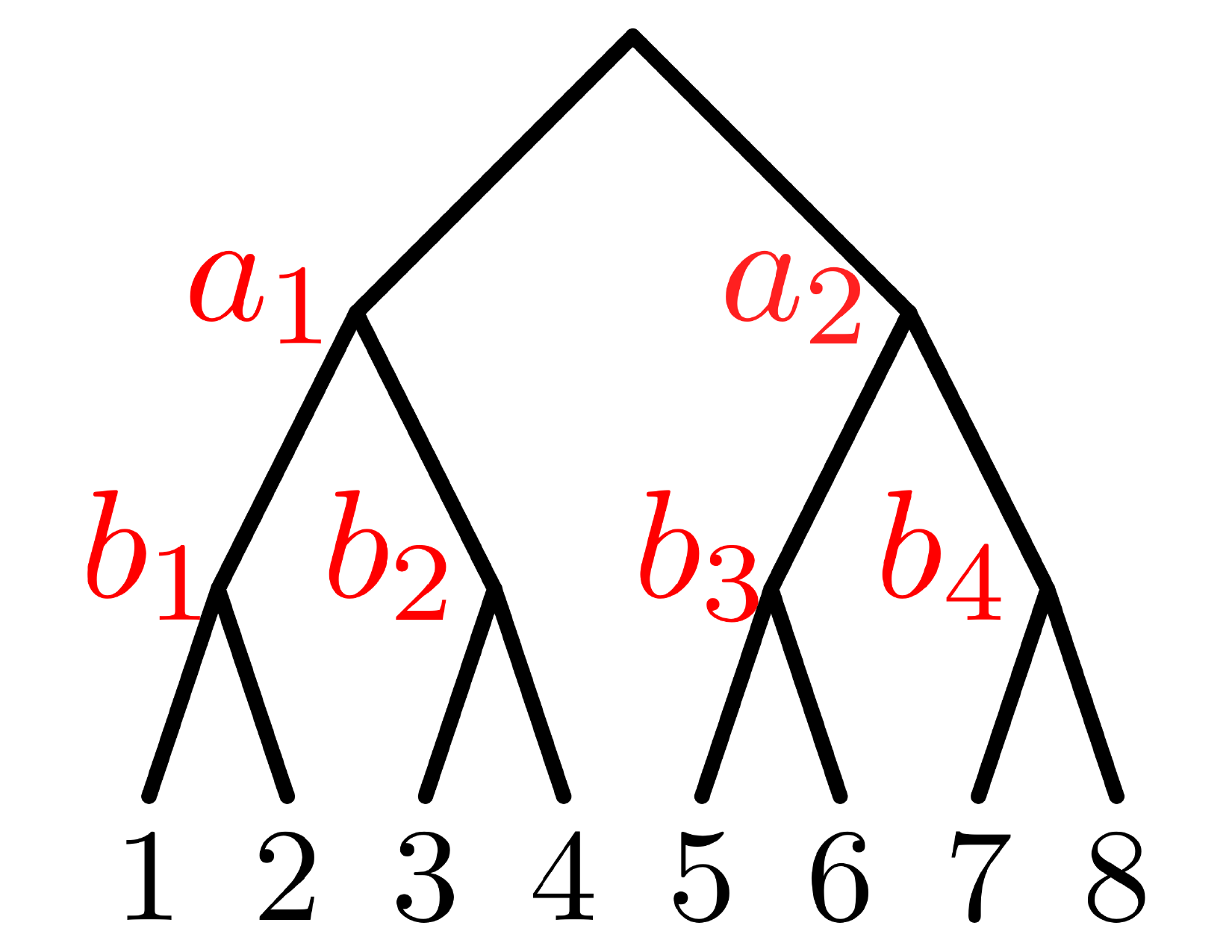}\label{f:parallelcsp8graph}}
\caption{(a)~Logarithmic-depth preparation of an eight-qubit cat state shows there are six possible locations for $X$ faults that create errors of weight at least two.  Parity checks need to be chosen to find corrections that leave the cat state with error of weight less than two. (b)~The circuit on the left can be represented as a graph, where a CNOT gate is represented by the splitting of an edge.} 
\end{figure}

In~\figref{f:parallelcsp8} an eight-qubit cat state is prepared in three rounds of CNOT gates.  There are six locations (marked in red) where an $X$ fault may cause an error of weight at least two.  These faults result in data errors with a different structure from the linear-depth protocols of \secref{s:d3csp}, hence different parity checks are required.  It is simpler to determine these parity checks if the circuit is viewed as a binary tree, as in~\figref{f:parallelcsp8graph}.  Here time flows down and every CNOT onto a fresh $\ket 0$ qubit is denoted by the splitting of an edge.  An $X$ fault at a marked location results in an $X$ error on all the leaf nodes directly under the location.  Note that a fault at the root cannot cause a bad error.  

We use only two-qubit parity checks, however larger parity checks may be used at the expense of increased depth.  If a parity check checks qubit $x$, it provides information on whether a fault occurred anywhere in the lineage: $l(x) = \{x, \pa(x), \pa(\pa(x)), \ldots , \ro\} $.  Therefore, if a parity check $(x,y)$ is triggered, a fault at one of the locations $ l(x) \cup l(y) \cup \{\spam \}$ has occurred, where $\{\spam \}$ is the set of faults during state preparation or measurement of the parity-check qubit. 

Using the parity checks $(1,5), (2,7), (3,6), (4,8)$, it is possible to separate the five distinct weight at least two errors (since the error due to $a_1$ and $a_2$ is the same up to the cat state's $X^{\otimes w}$ stabilizer) into distinct triggered flag~patterns:

\begin{center}
\begin{tabular}{c @{\hspace{0.5cm}} c c c c}
 & $(1,5)$ & $(2,7)$ & $(3,6)$ & $(4,8)$\\
 \hline 
 $a_1$, $a_2$ & $\bullet$ &  $\bullet$ &  $\bullet$ & $\bullet$ \\
$b_1$ &  $\bullet$ &  $\bullet$ & $\circ$ & $\circ$\\
$b_2$ & $\circ$ & $\circ$ & $\bullet$  &  $\bullet$\\
$b_3$ &  $\bullet$ & $\circ$ &  $\bullet$ & $\circ$\\
$b_4$ & $\circ$ &  $\bullet$ & $\circ$ & $\bullet$ \\
\end{tabular}
\end{center}

Note that a fault at any of the above locations requires a multi-qubit data correction.  We ensure that each of them is detected by at least two parity checks, as one faulty parity check must not induce corrections of weight greater than one.

\begin{theorem} \label{t:parallelcatd3s}
Using parallelized circuits, a $w$-qubit cat state can be prepared fault-tolerantly to distance three using $\frac{w}{2}$ parity checks, where $\frac{w}{2}= 2^j, \, j \in \mathbb{N}$.  The depth of the circuit is $2+ \log_2 w$.
\end{theorem}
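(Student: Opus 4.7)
My plan is to establish the theorem by constructing the parallelized preparation circuit explicitly, identifying all fault locations that require correction, and designing $w/2$ two-qubit parity checks that distinguish them while being robust to a single measurement fault.

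First, I would view the non-fault-tolerant preparation as a balanced binary tree of depth $\log_2 w$: the $\ket{+}$ qubit sits at the root, each internal node corresponds to a CNOT onto a fresh $\ket 0$ ancilla, and after $\log_2 w$ parallel rounds the $w$ leaves become the data qubits of the cat state. An $X$ fault anywhere inside the tree propagates forward to an $X$ on precisely the leaves of the subtree rooted at that position. A fault at the root alone produces $X^{\otimes w}$, which is the cat state stabilizer and hence harmless. So the bad fault locations are exactly the $w-2$ non-root internal positions, each associated with a contiguous leaf-subtree $S_f$ of size at least two.

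Next I would append the $w/2$ parity checks. Each check uses an ancilla prepared in $\ket 0$, receives CNOTs from a chosen pair of data qubits, and is then measured. If the $w/2$ pairs partition the $w$ data qubits, then all first CNOTs fit in one parallel round and all second CNOTs fit in another, giving depth $2$ and total depth $2 + \log_2 w$. The core of the proof is the choice of pairing. I would define it recursively: for $w = 4$ use $(1,3),(2,4)$; given a pairing $\pi_j$ on $w = 2^{j+1}$ leaves, place two copies of the tree side by side and pair each left-subtree leaf with a right-subtree leaf via a permutation built from $\pi_j$ and a half-shift, reproducing the $(1,5),(2,7),(3,6),(4,8)$ pattern shown for $w=8$.

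With this construction, a fault at $f$ triggers exactly the parity checks whose pair straddles $S_f$. I would verify by induction on $j$ that (i) for every bad $f$ the triggered set has size at least two, and (ii) distinct bad faults produce distinct triggered sets, modulo the harmless identification of the two top-level faults $a_1,a_2$, which differ by the stabilizer $X^{\otimes w}$. Given these two properties, each nonzero triggered pattern is decoded to a weight-$\lceil |S_f|/2\rceil$ correction supported inside $S_f$, which leaves residual weight at most one as required for distance three; and since every nonzero pattern has weight at least two, a single SPAM fault on a parity-check ancilla, which flips exactly one bit of the pattern, cannot turn the all-zeros pattern into a nontrivial correction and cannot change any triggered pattern into another triggered pattern's decoding.

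The main obstacle is the inductive verification of (i) and (ii). The $w=8$ table makes both conditions transparent, but for larger $w$ one must track how the triggered set of a fault deep in the tree interacts with pairings inherited from every level of the recursion; showing that the half-shift at each stage never produces a pair entirely inside a single subtree at any level, and that the resulting triggered sets are pairwise distinct, is the only genuinely delicate piece of bookkeeping.
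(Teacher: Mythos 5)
Your plan is the same as the paper's: view the preparation as a balanced binary tree, append $w/2$ disjoint two-qubit parity checks so that they cost only two extra rounds of gates, and ignore weight-one flag patterns so that a single faulty check cannot induce a multi-qubit correction. The paper's matching is explicit---check $i$ touches $(i,\tfrac{w}{2}+2i-1)$ for $i\le w/4$ and $(i,2i)$ for $w/4<i\le w/2$---and your recursive ``half-shift'' pairing is intended to reproduce it. But there are two genuine problems. The first is your decoding rule: you decode a triggered pattern to a correction of weight $\lceil |S_f|/2\rceil$ supported inside $S_f$ and claim a residual of weight at most one. The error caused by the fault is $X$ on \emph{all} of $S_f$, so your correction leaves a residual of weight $\lfloor |S_f|/2\rfloor$, which is already $2$ when $|S_f|=4$; that breaks distance-three fault tolerance. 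Since your property (ii) identifies the fault uniquely (up to the harmless $a_1/a_2$ ambiguity, where the two errors differ by the stabilizer $X^{\otimes w}$), the correct rule is simply to apply $X$ on every leaf of the identified subtree, cancelling the error exactly; this is what the paper does.

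The second problem is that you defer the verification of properties (i) and (ii), which is the actual content of the theorem, and your recursive description of the matching is too loose to carry it out. With the explicit matching, property (i) is immediate: every pair straddles the two halves of the leaves and the $w/2$ pairs partition all $w$ leaves, so a fault at depth $k\ge 1$, whose leaf set $B$ lies within one half, meets each pair in at most one leaf and therefore triggers exactly $|B|=w/2^k\ge 2$ checks. Property (ii) still needs an argument that distinct subtree blocks yield distinct triggered check-sets; the interleaving of the two sub-matchings (odd versus even targets in the second half) is precisely what makes this work, and it is the part your sketch would have to pin down before the proof is complete.
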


\begin{proof}

For parity check $i \in \{1,2, \mathellipsis , \frac{w}{4} \}$, the cat state qubits checked are $(i, \frac{w}{2} + 2i - 1)$.  For the remaining parity checks  $i \in \{ \frac{w}{4} +1, \frac{w}{4} +2, \mathellipsis ,  \frac{w}{2} \}$, the qubits checked are $(i, 2i)$.  As in \figref{f:parallelcsp8graph}, faults at the ${\color{red}a}$ level (depth-one) locations trigger all the parity checks, since each parity check is executed on one cat state qubit from the first half, and one from the second.  The correction $X^{\otimes w/2}$ on either half of the qubits works for both faults as $(X^{\otimes w/2} \otimes \Id^{\otimes w/2})( \Id^{\otimes w/2} \otimes X^{\otimes w/2}) = X^{\otimes w}$ is a stabilizer of the cat state.  Faults at the ${\color{red}b}$ level (depth-two) trigger distinct sets of $\frac{w}{2^2}$ parity checks, where the correction is on all the leaf nodes under the uniquely identified fault.  The same holds for faults at depth-$k$, which trigger distinct sets of $\frac{w}{2^k}$ parity checks.

One faulty parity check leads to a weight-one flag pattern, for which we do not apply corrections, as the error is restricted to at most one cat state qubit.  
\end{proof}

\section{Distance-five and distance-seven fault-tolerant stabilizer measurement}
\label{app:d5d7}

Distance-five fault tolerance is interesting for stabilizers of weight $w \geq 6$.  For $w \in \{ 6, 7, 8\}$, the circuits in \figref{f:syndromemeasurementd5} with seven ancilla qubits are distance-five fault-tolerant.  We present a general method to construct stabilizer measurement circuits for arbitrary $w$ in \figref{f:d5arbwsyndmeas}.  By computer simulation, we verify the fault tolerance of this construction for $w$ up to $90$ qubits.  The general construction proceeds as follows. First, five flag qubits are activated. For each additional flag that is needed, the gates in the shaded blue region are applied.  These gates deactivate an existing flag and activate a new flag.  Finally, when no additional flags are needed, the flags are deactivated in the order $\{ 2,4,1,5,3\}$. $1$ denotes the flag that has been active for the longest time and $5$, the flag that has been active for the shortest time.  To ensure faults are correctly flagged, it is necessary to ensure there is asymmetry between the order in which flags are activated and the order in which they are deactivated. This is in contrast to the distance-three DiVincenzo-Aliferis method in \figref{f:CompDAcard}, where both the orders are symmetric.

In \figref{f:d5arbwsyndmeas}, the thick black line indicates the $w$-qubit register of data qubits that are in the support of the stabilizer.  Data CNOT gates (in black) are applied to qubits $\{ w, w-1, \mathellipsis , 1\}$ after every flag CNOT (in red).  The last data CNOT must be placed either before the third-last or second-last flag CNOT.  The addition of another data CNOT gate before the last flag CNOT results in uncorrectable errors. 

If there are $a$ ancilla qubits, one can measure a weight-$(2 a - 5)$ or weight-$(2 a - 4)$ stabilizer.  Hence a weight-$w$ CSS stabilizer may be fault-tolerantly measured to distance-five, for $w \leq 2a - 4$.  Note that at most five flag qubits are active at any instant.  Hence with fast qubit reset, one only requires five flag ancillas and one syndrome ancilla to measure an arbitrary weight stabilizer fault-tolerantly to distance-five.

For distance-seven fault-tolerance, we detail changes to the spacing between data CNOT gates and generalize the order in which flag ancillas are activated and deactivated.  We show how to construct circuits for stabilizer of arbitrary weight $w$ by first discussing a circuit for a weight-$17$ stabilizer, shown in \figref{f:d7w17}. We chose $w=17$ since the circuit is non-trivial and its construction encompasses all the tricks needed to construct circuits for arbitrary weight.  In general, compared to \figref{f:d5arbwsyndmeas}, the number of ancilla CNOT gates between data CNOT gates is doubled, except in the center of the circuit, where it is tripled for the length of four data CNOT gates.  For odd $w$, the number of ancilla CNOTs between the $w-1$ subsequent pairs of data CNOT gates is the sequence $\{ (\lceil \frac{w-6}{2} \rceil  \text{ 2's} ), 3, 3, 3, 3 , (\lfloor \frac{w-6}{2} \rfloor \text{ 2's}), 1\}$, as shown in \figref{f:d7w17}. For even $w$, the sequence is $\{ ( \frac{w-6}{2} \text{ 2's} ), 3, 3, 3, 3 , (\frac{w-6}{2} \text{ 2's}), 1\}$. Note that, as shown in \figref{f:d7w17}, one additional ancilla CNOT gate is required at the start.

Next we comment on the order in which ancilla qubits are deactivated as flags. Similar to the distance-five case, after initially activating seven flags, a flag is deactivated to activate a new flag qubit. An active group of seven flags is closed in the order $\{ 2,4,6,1,3,5,7 \}$. As these seven flags are closed, seven new flags are simultaneously opened. The process repeats unitl there are exactly seven remaining flags to close. These last seven flags are also closed in the same order $\{ 2,4,6,1,3,5,7 \}$. In \figref{f:d7w17}, flag ancillas are shown in alternating colors to highlight the order that flags are activated and deactivated. Distance-seven fault-tolerance was verified by computer simulation for stabilizer weight up to $32$. The number of flag ancillas needed to measure a weight-$w$ stabilizer is $w+1$.  

The techniques described in this section may also be used to develop resource-efficient circuits that are fault-tolerant to higher distance.

\begin{figure*}
\centering
\subfigure[$w = 6$, $a = 7$]{\includegraphics[width=.325\textwidth]{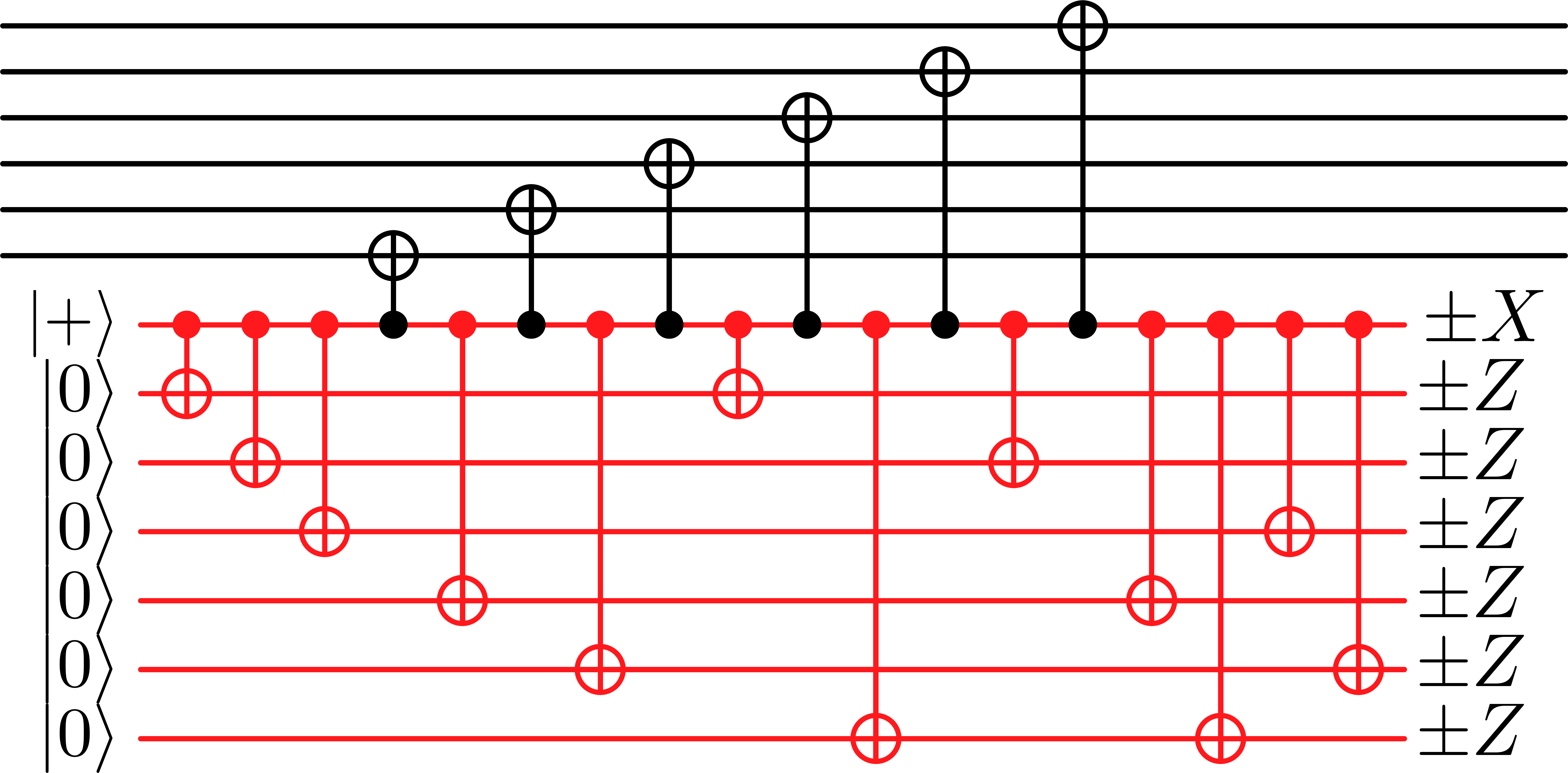}}     
\subfigure[$w = 7$, $a = 7$]{\includegraphics[width=.325\textwidth]{dist5w7}}
\subfigure[$w = 8$, $a = 7$]{\includegraphics[width=.325\textwidth]{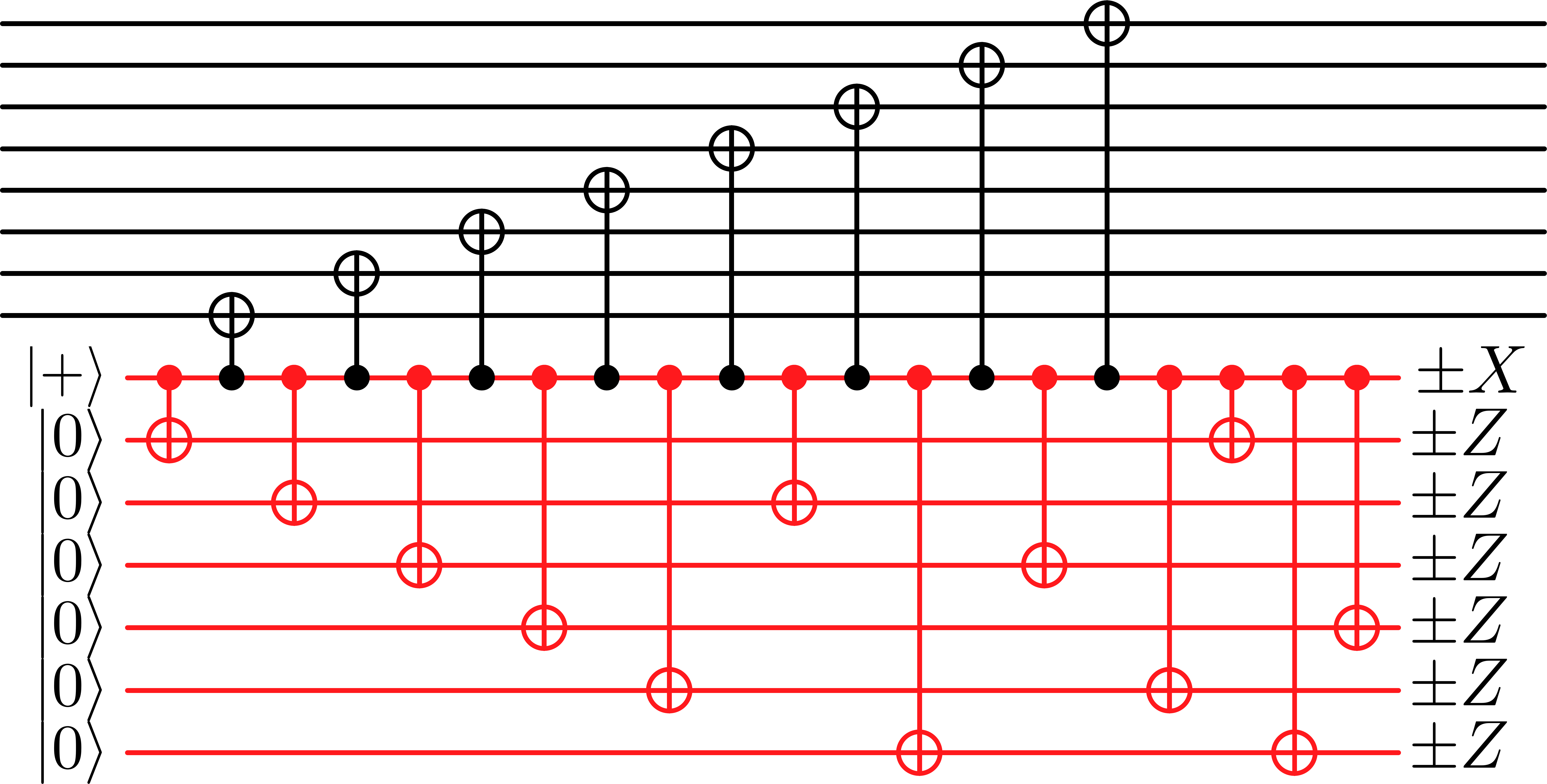}} 
\caption{Distance-five CSS stabilizer measurement with slow qubit reset for $w \in \{ 6, 7, 8\}$.  Red wires indicate syndrome and flag qubits.} 
\label{f:syndromemeasurementd5}
\end{figure*}

\begin{figure*}
\centering
\includegraphics[width=.62\textwidth]{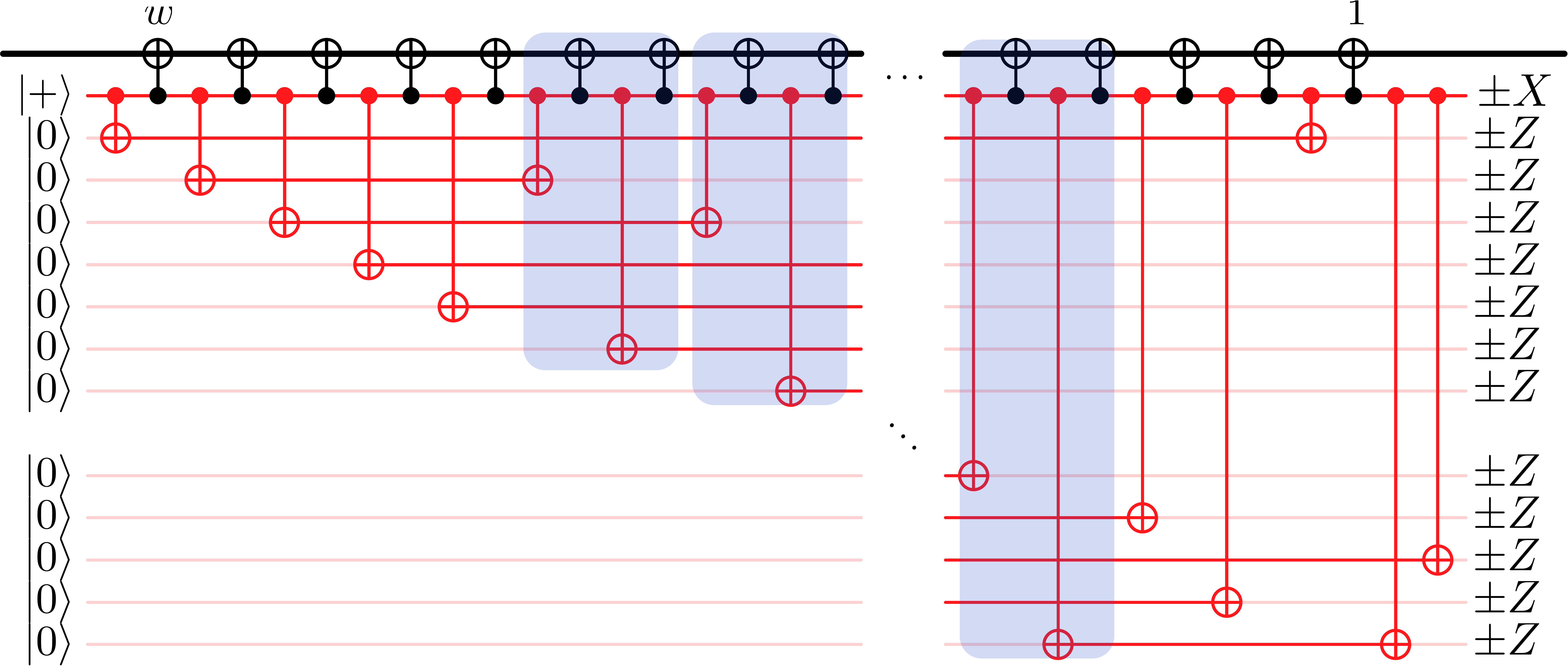} 
\caption{Distance-five syndrome measurement with slow qubit reset for a weight-$w$ $X$ stabilizer.  The thick black wire indicates a register of $w$ qubits.  An opaque red wire implies the flag is currently inactive and not catching faults. The gates in the blue section can be repeated to construct stabilizer measurement circuits for arbitrary stabilizer weight $w$. At any instant, only five flags are active. Hence this circuit can be performed with fast qubit reset using only five flag qubits. } \label{f:d5arbwsyndmeas}
\end{figure*}

\begin{figure*}
\centering
\includegraphics[width=.96\textwidth]{dist7w17} 
\caption{Distance-seven syndrome measurement with slow qubit reset for a weight-$17$ $X$ stabilizer. At any instant, only seven flags are active. Hence this circuit can be performed with fast qubit reset using only seven flag qubits. } \label{f:d7w17}
\end{figure*}

\end{document}
